\documentclass[conference,10pt,letterpaper]{IEEEtran}
\IEEEoverridecommandlockouts

\usepackage[T1]{fontenc}
\usepackage[utf8]{inputenc}
\usepackage{cite}
\usepackage{amsmath,amssymb,amsthm,mathtools}
\usepackage{textcomp}
\usepackage{xcolor}
\usepackage{booktabs}
\usepackage{array}
\usepackage{enumitem}
\usepackage{microtype}
\usepackage{balance}
\usepackage[hyphens]{url}
\usepackage[hidelinks,breaklinks]{hyperref}

\newtheorem{theorem}{Theorem}
\newtheorem{lemma}[theorem]{Lemma}
\newtheorem{corollary}[theorem]{Corollary}
\newtheorem{proposition}[theorem]{Proposition}
\theoremstyle{definition}
\newtheorem{definition}{Definition}
\theoremstyle{remark}
\newtheorem{remark}{Remark}

\newcommand{\Stream}[1]{S_{#1}}
\newcommand{\hash}[1]{H(#1)}
\newcommand{\INTER}{\mathit{INTER}}
\newcommand{\epoch}{\mathit{ep}}
\newcommand{\swarm}[1]{\mathcal{W}_{#1}}
\newcommand{\supermaj}{\tfrac{2}{3}}
\newcommand{\rID}{\mathit{rID}}
\newcommand{\PoC}{\Pi}

\setlist[itemize]{leftmargin=1.4em,topsep=2pt,itemsep=1pt,parsep=0pt}
\setlist[enumerate]{leftmargin=1.8em,topsep=2pt,itemsep=1pt,parsep=0pt}

\begin{document}

\title{Intercloud: Eventual Consistency for\\
Decentralised Economies via\\
Chilling-Effect Consensus}

\author{\IEEEauthorblockN{Gregory Magarshak}
\IEEEauthorblockA{IENYC\\
Email: gmagarshak@faculty.ienyc.edu}}

\maketitle

\begin{abstract}
We present \emph{Intercloud}, a decentralised economic network in which streams of private data are secured by Watcher swarms that observe only cryptographic hashes, never plaintext. Intercloud requires no global consensus beyond a single shared random seed per epoch. Security is provided by two mechanisms inherited from the Magarshak Machine~\cite{magarshak2026mm}: (i)~\emph{ripple deduplication} via epoch-stamped identifiers, guaranteeing that reactive execution terminates without global coordination by preventing any ripple from rippling through the same node twice per epoch; and (ii)~\emph{chilling-effect consensus}, in which a swarm reaches finality by attesting to the \emph{absence of conflicting evidence}, not by voting between alternatives. Any conflicting signed attestation automatically yields a self-certifying Proof of Corruption, making misbehaviour irrefutably detectable. We prove that execution ripples terminate in bounded time; that a swarm of approximately 35 Watchers assigned by a verifiable random function suffices for high-probability double-spending prevention, matching Hoepman's lower bound~\cite{hoepman2008}; that two correct clients can hold conflicting finality attestations only if the adversary both compromises a supermajority of the assigned swarm and eclipses both clients from all honest nodes; and that Buridan's Principle~\cite{lamport2012buridan} does not apply because the swarm is finite and the consensus question is absence of evidence. Stream states are coloured Green, Yellow, or Red; end users choose their own finality threshold. The single global random oracle required costs $O(N)$ messages \emph{per epoch}, not per transaction. Beyond the consensus layer, we develop a full economic model in which local coins are issued and retired by currency streams with pluggable exchange-rate formulae, and we prove that security weight tracks economic value automatically via a dot-product identity. The coin layer and content layer are strictly separated.
\end{abstract}

\begin{IEEEkeywords}
Distributed consensus, eventual consistency, decentralised ledgers, double-spending prevention, verifiable random functions, proof of corruption, privacy, Byzantine fault tolerance.
\end{IEEEkeywords}

\section{Introduction}
\label{sec:intro}

\subsection{The Blockchain Cost Problem}

Consider transporting \$1 by armoured truck. A rational security model assigns one truck to one dollar. A blockchain does the opposite: every validator in the network must verify every transaction, regardless of its value. The global security budget --- billions of dollars of electricity in proof-of-work systems, billions of staked tokens in proof-of-stake systems --- is deployed uniformly for every transaction. A one-dollar transfer and a billion-dollar transfer consume the same validation resources.

Beyond cost, the global ledger model has a privacy problem. Every transaction, every balance, every smart-contract invocation is visible to all participants. Transaction graph analysis can de-anonymise users even when addresses are pseudonymous.

Intercloud addresses both problems simultaneously. Its central architectural choice is that Watcher nodes --- the nodes that provide security --- see only \emph{hashes} of stream states, never plaintext. Security is allocated \emph{proportionally to the square root of economic value}: a stream holding one dollar uses approximately 35 Watchers; a stream holding one million dollars uses $\sqrt{10^6} \approx 1000$ times more Watchers --- roughly 35{,}000. This is the armoured-convoy model made precise: the cost of security scales sub-linearly with the value transported.

\subsection{The Two Core Mechanisms}

\subsubsection{Ripple deduplication}
The Magarshak Machine~\cite{magarshak2026mm} prevents cyclic reactive execution by tagging each ripple with a unique \emph{ripple identifier} ($\rID$). Nodes store $\rID$s in a hash table indexed by $(\rID, \epoch)$, retaining entries for two epoch durations before eviction. Because execution ripples carry fees at each hop, the table remains compact. This mechanism is proven correct for single-node Magarshak Machines; we extend it to distributed Intercloud in Section~\ref{sec:waves}.

\subsubsection{Chilling-effect consensus}
A Watcher swarm for stream $\Stream{i}$ reaches \emph{finality} when a supermajority of its members has each independently verified via gossip that a supermajority of the swarm has attested to the same hash $h$ and seen no conflicting hash during this epoch. Critically, this is not a vote between two candidate values. It is an attestation to the \emph{absence of conflict}. If any two attestations conflict, both signers automatically produce a Proof of Corruption ($\PoC$) --- a self-certifying evidence package requiring no trusted third party to verify. The chilling effect is the deterrence: a Watcher that produces conflicting attestations is immediately and irrevocably convicted.

\subsection{Comparison with Existing Approaches}

\subsubsection{Global-consensus systems}
Nakamoto~\cite{nakamoto2008bitcoin} and proof-of-stake variants require every validator to process every transaction. The validation cost is $O(N)$ per transaction where $N$ is the network size.

\subsubsection{BFT protocols}
PBFT~\cite{castro1999pbft} achieves safety with $f < n/3$ Byzantine nodes but requires $O(n^2)$ messages per consensus round. Tendermint~\cite{buchman2016tendermint} reduces this but still requires global participation per round.

\subsubsection{Hoepman's result}
Hoepman~\cite{hoepman2008} proved that using coin identifiers to assign clerk sets, double-spending can be prevented with clerk sets whose size is independent of the total number of nodes $N$, depending only on security parameters. For standard parameters this yields sets of approximately 35 nodes. Intercloud's Watcher swarms are precisely these clerk sets, with stream identifiers replacing coin identifiers.

\subsubsection{Buridan's Principle}
Lamport~\cite{lamport2012buridan} proved that a binary discrete decision on a continuous-valued input cannot be made in bounded time. We prove in Section~\ref{sec:buridan} that chilling-effect consensus is not subject to this principle because the swarm is finite and the decision is not binary.

\subsection{Paper Organisation}

Section~\ref{sec:background} reviews relevant prior work. Section~\ref{sec:model} defines the Intercloud model. Sections~\ref{sec:waves}--\ref{sec:randao} state and prove the main results. Section~\ref{sec:junior_rewards} formalises the junior-node corruption detection and PoC lottery reward mechanism. Section~\ref{sec:stake} develops the vesting and rational security argument. Sections~\ref{sec:local_coins}--\ref{sec:privacy_separation} develop the local coin economy: emission, retirement, exchange rates, the dot-product security theorem, and the coin--content separation. Section~\ref{sec:zk} presents the zero-knowledge extension. Section~\ref{sec:limitations} discusses limitations and future work. Section~\ref{sec:conclusion} concludes.

\section{Background}
\label{sec:background}

\subsubsection{Magarshak Machine}
A Magarshak Machine~\cite{magarshak2026mm} $\mathcal{MM} = (S, P, A, C, E, R)$ --- the \textsc{spacer} model --- is a formal model for governed, append-only distributed state. Streams $S_i$ are append-only message logs under a publisher's authority; $R$ is a bidirectional relation index --- two tables ($\mathit{relatedTo}$, $\mathit{relatedFrom}$) updated atomically when streams post \texttt{relateTo}/\texttt{unrelateTo} messages; actions $A$ are policy-checked transformations declaring their read and write sets; capabilities $C$ are the sole mechanism for side effects; policy $P$ governs all transitions; and execution engine $E$ is the push-only reactive scheduler driven by four reduction rules (\textsc{Create}, \textsc{Trigger}, \textsc{Execute}, \textsc{Retry}). The MM paper derives 22 structural theorems from these rules, including append-only safety, embarrassing parallelism scaling linearly with publishers, causal consistency, ripple termination via local deduplication, minimal cache invalidation, deterministic replay, and consensus freedom. The Magarshak Machine proves ripple termination for single nodes via the local ripple-log deduplication mechanism. We extend this to distributed networks in Theorem~\ref{thm:termination}.

\subsubsection{Lamport's causal ordering}
Lamport~\cite{lamport1978clocks} established that in a distributed system without synchronised clocks, the \emph{happened-before} relation $\to$ defines a consistent partial order on events. Within each stream in Intercloud, message ordering follows Lamport timestamps; competing transfer requests are resolved by this causal ordering at the Executor.

\subsubsection{Lamport's Buridan's Principle}
Written in 1984 and published in 2012~\cite{lamport2012buridan}, this principle formalises the arbiter problem: any physical mechanism that must make a discrete choice based on a continuous input can be driven into an arbitrarily long period of indecision. Lamport proved that for any strategy an arbiter adopts, there exists a starting condition under which it fails to decide within any given bounded time. The principle applies to electronic arbiter circuits, traffic crossing decisions, and any consensus protocol that must choose between two nearly-equal alternatives. We revisit this in Section~\ref{sec:buridan}.

\subsubsection{Hoepman's distributed double-spending bounds}
Hoepman~\cite{hoepman2008} studied the problem of preventing double-spending in a distributed payment system without a central bank. The main result relevant here: if the coin (or stream) identifier is used to assign clerks, a clerk set of size
\begin{equation}
  n_{\min} = \frac{\beta}{r}\,
    \log_e\!\left(s + 1 + \log(r+2)\right)
\end{equation}
suffices to detect any double-spending with probability $\geq 1 - e^{-s}$, where $r$ is the maximum tolerated number of double-spendings, $s$ is the security parameter, and $\beta$ depends on $n$ and $f$ (the fraction of dishonest nodes) but is \emph{independent of the total network size} $N$. For $r=1$, $s=30$, $f/n = 1/3$, this gives $n_{\min} \approx 35$. This is the exponentially diminishing returns result: adding nodes beyond $\approx\!35$ provides negligible additional security for any fixed $f/n$ ratio.

\subsubsection{CRDTs and eventual consistency}
Shapiro et al.~\cite{shapiro2011crdts} formalise data structures that can be updated concurrently without coordination and merged deterministically. Intercloud streams are more constrained: they have a single authoritative publisher and append-only semantics. This enables stronger consistency guarantees (a single head per stream) at the cost of requiring Executor coordination for writes.

\section{The Intercloud Model}
\label{sec:model}

\subsection{Streams and the Hash-Data Separation}

\begin{definition}[Stream]
\label{def:stream}
A \emph{stream} $\Stream{i}$ is a tuple $(\mathcal{M}_i, k_i, \INTER_i, \mathit{Rules}_i)$ where:
\begin{itemize}
  \item $\mathcal{M}_i = (m_0, m_1, \ldots)$ is an append-only sequence, each message satisfying $m_j.\mathit{prev} = \hash{m_{j-1}}$.
  \item $k_i$ is the stream owner's public key.
  \item $\INTER_i \in \mathbb{R}_{\geq 0}$ is the Intercoin weight staked.
  \item $\mathit{Rules}_i$ is the stream's deterministic Rules program.
\end{itemize}
The \emph{state hash} is $h_i = \hash{m_k}$ where $m_k$ is the latest appended message.
\end{definition}

\begin{definition}[Integrity layer and data layer]
\label{def:layers}
For each stream $\Stream{i}$, the \emph{integrity layer} stores only $(h_i, \INTER_i, \hash{\mathit{Rules}_i})$. This is all that Watcher nodes hold. The \emph{data layer} stores plaintext $\mathcal{M}_i$ and $\mathit{Rules}_i$, held only by the stream owner and authorised Executors. A Watcher never receives data-layer content.
\end{definition}

\begin{remark}[Privacy guarantee]
Because Watchers see only hashes, a fully compromised Watcher reveals no amounts, balances, counterparties, or rule logic. There is no global transaction ledger to analyse. This is strictly stronger privacy than Monero (which reveals a transaction graph structure) and Zcash (which maintains a shielded pool with publicly visible entry and exit events).
\end{remark}

\subsection{Epochs and Verifiable Random Swarm Assignment}

\begin{definition}[Epoch and shuffle]
\label{def:epoch}
Time is divided into epochs of duration $T_{\epoch}$. At the start of each epoch, a verifiable random function (VRF) seeded by a shared random oracle assigns each stream $\Stream{i}$ a Watcher swarm:
\begin{equation}
  \swarm{i} = \mathsf{VRF}(\mathit{seed}_\epoch,\, i,\, n_i),
\end{equation}
where $\mathit{seed}_\epoch$ is unpredictable before the epoch begins. No node chooses its own swarm assignment.
\end{definition}

\begin{definition}[Swarm size and Intercoin weight]
\label{def:swarmsize}
The swarm size $n_i = \min(N, \lceil c\sqrt{\INTER_i} \rceil)$ for a system constant $c$ calibrated so that streams at the base security level ($s = 30$, $r = 1$) use approximately 35 Watchers. Higher stake attracts larger swarms proportionally.
\end{definition}

\subsection{Ripple Identifiers}

\begin{definition}[Ripple identifier]
\label{def:rippleid}
A \emph{ripple identifier} is $\rID = (\mathit{originId}, \mathit{msgHash}, \epoch)$. Each node $v$ maintains a hash table $\mathit{seen}_v$ of $\rID$ values. Before processing a message with identifier $\rID$: if $\rID \in \mathit{seen}_v$, discard; otherwise insert and process. Entries with $\epoch < \epoch_{\mathit{current}} - 1$ are evicted (retained for two epochs).
\end{definition}

\subsection{Economic Relations and Transfers}

\begin{definition}[Intercoin backing invariant]
\label{def:backing}
For every stream $\Stream{i}$ and every coin type $c$, the \emph{exchange rate} $\mathit{exRate}(c) \in \mathbb{R}_{>0}$ is a publicly computable function converting units of $c$ to units of Intercoin (the default formula is given in Definition~\ref{def:exchange_rate}; pluggable alternatives are discussed in \S\ref{sec:local_coins}). The \emph{backing invariant} requires:
\begin{equation}
  \INTER_i \;\geq\; \sum_{c} \Stream{i}.\mathit{balance}[c] \cdot \mathit{exRate}(c).
\end{equation}
This invariant is established at stream creation (when initial stake is deposited) and maintained as an invariant of every valid transfer (Lemma~\ref{lem:backing}).
\end{definition}

\begin{lemma}[Backing invariant preservation]
\label{lem:backing}
Every valid transfer preserves the backing invariant (Definition~\ref{def:backing}) at both the sender stream $\Stream{i}$ and the recipient stream $\Stream{j}$.
\end{lemma}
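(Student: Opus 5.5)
We argue by induction over the sequence of valid transfers. The base case is stream creation, where Definition~\ref{def:backing} says the invariant is established by the initial stake deposit. The inductive step fixes a single valid transfer of amount $a>0$ of coin type $c$ from $\Stream{i}$ to $\Stream{j}$. Since the paper has not yet written down a transfer rule explicitly, we first make precise what ``valid'' must mean. Such a transfer requires $\Stream{i}.\mathit{balance}[c]\geq a$, and it is executed atomically by the Executor with the following updates:
\[
\Stream{i}.\mathit{balance}[c] \mathrel{-}= a,\qquad
\Stream{j}.\mathit{balance}[c] \mathrel{+}= a,
\]
\[
\INTER_i \mathrel{-}= a\cdot\mathit{exRate}(c),\qquad
\INTER_j \mathrel{+}= a\cdot\mathit{exRate}(c).
\]
In words, the Intercoin weight backing the transferred coins moves together with them. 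All four updates use the same rate value, fixed at the transfer's causal position (Lamport order at the Executor).

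For the sender, write $B_i=\sum_{c'}\Stream{i}.\mathit{balance}[c']\cdot\mathit{exRate}(c')$ for the backed value. Both sides of the invariant $\INTER_i\geq B_i$ drop by exactly $a\cdot\mathit{exRate}(c)$, so the slack $\INTER_i-B_i\geq 0$ is unchanged. The sender's balance stays nonnegative by the precondition. Nonnegativity of the sender's new weight,
\[
\INTER_i - a\cdot\mathit{exRate}(c)\;\geq\; B_i - a\cdot\mathit{exRate}(c)\;\geq\; 0,
\]
follows from the inductive hypothesis together with the precondition, since $B_i\geq a\cdot\mathit{exRate}(c)$ because every term of the sum is nonnegative. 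This keeps $\INTER_i\in\mathbb{R}_{\geq 0}$ as Definition~\ref{def:stream} requires. The recipient case is symmetric: both sides rise by the same amount, so the slack $\INTER_j-B_j$ is again preserved. Streams other than $i$ and $j$ are untouched.

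The main obstacle is the exchange rate. $\mathit{exRate}(c)$ is a publicly computable function that may change between transfers (Definition~\ref{def:exchange_rate} and the pluggable formulae of \S\ref{sec:local_coins}). A rate increase would raise $B_i$ at a stream that made no transfer at all, and no transfer rule can prevent that. I would therefore state the lemma relative to the rate in force at each transfer: the transfer preserves the inequality with respect to the rates used in that transfer, which is exactly what the slack argument proves. Rate drift is then handled separately, either by requiring that rates be evaluated and committed once per epoch so that all transfers within an epoch use one value, or by noting that the invariant is re-checked (and collateral topped up) at epoch boundaries. If the paper's default formula makes $\mathit{exRate}$ nonincreasing in issuance, I would cite that monotonicity to close the gap. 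Otherwise I would make the fixed-rate assumption explicit in the lemma's scope.
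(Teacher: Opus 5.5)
Your proposal is correct and is essentially the paper's proof. The paper likewise sets $\Delta = a\cdot\mathit{exRate}(c)$, applies $\INTER_i' = \INTER_i - \Delta$ and $\INTER_j' = \INTER_j + \Delta$ with the $c$-balances shifted by $\mp a$, and observes that both sides of the inequality move by exactly $\Delta$ at each stream, implicitly holding every rate fixed across the transfer just as you do.

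The induction wrapper, the nonnegativity of $\INTER_i'$ (which the paper proves separately in Theorem~\ref{thm:zerosum}), and the rate-drift discussion go beyond the lemma's one-step claim. One correction there: the hoped-for monotonicity would not close the drift gap, because Definition~\ref{def:exchange_rate} says the default rate rises on retirements and reserve deposits. Your fallback of scoping the lemma to the rate in force at each transfer is therefore the right one.
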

\begin{IEEEproof}
Let $\Delta = a \cdot \mathit{exRate}(c)$. After a valid transfer of coin type $c$: $\Stream{i}.\mathit{balance}[c]' = \Stream{i}.\mathit{balance}[c] - a$ and $\INTER_i' = \INTER_i - \Delta$. All other balances are unchanged. We verify the backing invariant for $\Stream{i}$ after the transfer:
\begin{align*}
\INTER_i'
  &= \INTER_i - \Delta \\
  &\geq \sum_{c''} \mathit{balance}[c''] \cdot \mathit{exRate}(c'') - a \cdot \mathit{exRate}(c) \\
  &= \sum_{c'' \neq c} \mathit{balance}[c''] \cdot \mathit{exRate}(c'') \\
  &\phantom{=} \;\; + (\mathit{balance}[c] - a) \cdot \mathit{exRate}(c) \\
  &= \sum_{c''} \mathit{balance}[c'']' \cdot \mathit{exRate}(c''),
\end{align*}
which is exactly the backing invariant for the updated balances. For $\Stream{j}$: $\INTER_j' = \INTER_j + \Delta$ and $\mathit{balance}[c]'_j = \mathit{balance}[c]_j + a$; since both sides of the invariant increase by $\Delta$, the invariant is preserved.
\end{IEEEproof}

\begin{definition}[Economic relation]
\label{def:relation}
An \emph{economic relation} $R_{ij}$ from $\Stream{i}$ to $\Stream{j}$ is a pre-established channel specifying a rate-limit function $r_{ij}(\Delta t)$ and a set $\mathcal{T}_{ij}$ of permitted coin types. Transfers may flow only through pre-existing relations. A single transfer uses exactly one relation: a transfer of amount $a$ in coin $c$ from $\Stream{i}$ to $\Stream{j}$ debits $a$ from $\Stream{i}.\mathit{balance}[c]$ exactly once.
\end{definition}

\begin{definition}[Transfer message]
\label{def:transfer}
A \emph{transfer} from $\Stream{i}$ to $\Stream{j}$ of amount $a$ in coin type $c$ is valid if: (i) $R_{ij}$ pre-exists; (ii) $c \in \mathcal{T}_{ij}$; (iii) $a \leq r_{ij}([t_{\mathrm{last}}, t])$; (iv) $\Stream{i}.\mathit{balance}[c] \geq a$.
\end{definition}

\begin{definition}[Simple coin]
\label{def:coin}
A \emph{simple coin} is a stream $\Stream{i}$ whose Rules program maintains a single encrypted \texttt{owner} field, representing indivisible ownership of the stream's balance. Simple coins are a special case of streams; all results proved for streams apply to simple coins. The more general \emph{currency stream} (Definition~\ref{def:currency_stream}) supports divisible supply with emission and retirement.
\end{definition}

\subsection{The Three-Colour State Model}

\begin{definition}[Stream colour]
\label{def:colour}
A stream $\Stream{i}$ is:
\begin{itemize}
  \item \textbf{Green}: The swarm $\swarm{i}$ has reached finality (Definition~\ref{def:finality}).
  \item \textbf{Yellow}: Transactions are in progress but the swarm has not yet achieved the coherent supermajority-of-supermajority attestation.
  \item \textbf{Red}: Valid $\PoC$s have been propagated by junior observer nodes such that more than $\supermaj |\swarm{i}|$ active (non-junior) swarm members are implicated --- no honest supermajority can form, finality is unachievable, and the stream cannot be trusted. The stream remains Red until the next epoch shuffle assigns a fresh, independently drawn swarm.
\end{itemize}
\end{definition}

\section{Ripple Termination in Distributed Intercloud}
\label{sec:waves}

\begin{theorem}[Distributed Ripple Termination]
\label{thm:termination}
Let $\mathcal{N}$ be the finite set of nodes in the Intercloud network ($|\mathcal{N}| = N < \infty$), and assume every message hop incurs a positive fee $\delta_{\mathit{fee}} > 0$. In an Intercloud network with the ripple-ID mechanism of Definition~\ref{def:rippleid}, every reactive execution ripple terminates within at most $N$ message hops, and no node processes the same ripple more than once per epoch.
\end{theorem}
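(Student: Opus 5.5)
The argument is a pigeonhole on the deduplication table of Definition~\ref{def:rippleid}. I will prove the two claims of Theorem~\ref{thm:termination} in reverse order: first that no node processes a ripple twice per epoch, and then the hop bound as a consequence.

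\emph{Per-epoch uniqueness.} I model a ripple as the set of processing events that share one $\rID = (\mathit{originId}, \mathit{msgHash}, \epoch)$. Every forwarded message carries the $\rID$ of the ripple it belongs to, and processing happens only after a check-and-insert on $\mathit{seen}_v$.
\begin{itemize}
  \item The first time node $v$ processes the ripple in epoch $\epoch$, it inserts $\rID$ into $\mathit{seen}_v$.
  \item Eviction removes only entries with epoch stamp below $\epoch_{\mathit{current}} - 1$. So this entry survives for all of epoch $\epoch$, and also for $\epoch+1$, which covers messages delayed across the boundary.
  \item Any later arrival of the same $\rID$ at $v$ during the epoch is therefore discarded.
\end{itemize}
I will state explicitly that the check-and-insert is atomic at each node, so two concurrent deliveries cannot both pass the test. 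Without this the uniqueness claim fails.

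\emph{Hop bound.} Consider the propagation structure of the ripple: a directed graph on $\mathcal{N}$ with an edge $u \to v$ whenever $v$ processes the ripple because of a message from $u$.
\begin{itemize}
  \item By uniqueness, each node is processed at most once. So every causal chain of processing events visits distinct nodes.
  \item A chain of distinct nodes has at most $N$ nodes, hence at most $N-1 \le N$ message hops along any branch.
  \item The total number of processing events is at most $N$, and each event emits finitely many messages. So the ripple emits finitely many messages and terminates.
\end{itemize}
I will read ``within $N$ hops'' as a bound on causal depth; the total message count is at most $N$ times the maximum fan-out.

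\emph{The fee assumption.} The positive fee $\delta_{\mathit{fee}} > 0$ is not logically needed for the bound. I will use it only as a secondary argument: a ripple funded with budget $B$ cannot exceed $B/\delta_{\mathit{fee}}$ hops. This also bounds any re-propagation that arises because the $\rID$ is re-stamped with a new epoch.

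\emph{Main obstacle: epoch boundaries.} If a ripple straddles epochs and is re-stamped, its $\rID$ changes and deduplication resets. Two cases are possible:
\begin{itemize}
  \item \textbf{The stamp is fixed at origination.} Then the two-epoch retention already blocks reprocessing within the window, and messages older than the window are rejected as stale.
  \item \textbf{Re-stamping is permitted.} Then the per-epoch claim still holds, since a re-stamped $\rID$ belongs to a new epoch. Global termination across epochs then rests on the fee budget bound $B/\delta_{\mathit{fee}}$.
\end{itemize}
I will present the fixed-stamp reading as the primary case and the fee budget as the fallback.
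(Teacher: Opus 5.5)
Your core argument matches the paper's. Check-and-insert on $\mathit{seen}_v$ gives at-most-once processing per node, and a pigeonhole over the finite node set rules out any causal chain longer than $N$ nodes. The genuine difference is how you handle the epoch boundary. That is exactly where ``at most once per epoch'' must be upgraded to ``at most once, ever'' for the pigeonhole to apply.

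The paper leaves Definition~\ref{def:rippleid} unchanged and adds a timing assumption: $T_{\epoch} > D\cdot\delta_{\min}$. From this it concludes that every ripple finishes inside one epoch, so two-epoch retention catches all duplicates. You instead reject messages whose stamp is below $\epoch_{\mathit{current}}-1$. That rule is not part of Definition~\ref{def:rippleid}, so state it as an explicit strengthening of the mechanism, not as a consequence of it. Without it, a message delayed two epochs reaches nodes whose entries have all been evicted, and the ripple can re-propagate.

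Your route needs no synchrony assumption at all, at the cost of modifying the mechanism. The paper's route keeps the mechanism as defined but leans on timing. Its timing step would actually need an upper bound on per-hop latency, since a lower bound $\delta_{\min}$ on hop time does not bound completion time.

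Two of your side remarks are correct refinements of the paper:
\begin{itemize}
  \item $\delta_{\mathit{fee}}$ plays no role in the $N$-hop bound; the paper's proof never uses it either.
  \item Check-and-insert must be atomic at each node.
\end{itemize}
One caution: your re-stamping fallback yields only $B/\delta_{\mathit{fee}}$ hops, not $N$. Under that reading it does not prove the stated bound, so keep the fixed-stamp case as the actual proof.
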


\begin{IEEEproof}
\emph{No duplicate processing within an epoch.} Let $\rID$ carry epoch $\epoch$. Before processing, node $v$ checks $\rID \in \mathit{seen}_v$. If present, $v$ discards. Otherwise $v$ inserts $\rID$ into $\mathit{seen}_v$ and processes. Since $\mathit{seen}_v$ is a set, the insertion occurs at most once per epoch. Therefore $v$ processes any message bearing $\rID$ at most once per epoch.

\emph{Termination.} Suppose a ripple with $\rID$ does not terminate. Then there is an infinite sequence of nodes $v_1, v_2, \ldots$ each processing a message with $\rID$. By the previous step, each $v_k$ is distinct. Since $|\mathcal{N}| = N < \infty$, no such infinite sequence exists. Contradiction.

\emph{Retention window sufficiency.} Let $D$ be the diameter of the network graph, finite since $|\mathcal{N}| < \infty$. Each hop takes wall-clock time at least $\delta_{\min} > 0$. Therefore any ripple initiated in epoch $\epoch$ completes propagation to all reachable nodes within $D \cdot \delta_{\min}$ time. The epoch duration $T_{\epoch}$ is chosen so that $T_{\epoch} > D \cdot \delta_{\min}$, ensuring every ripple completes within one epoch. Retaining $\rID$s for two epochs therefore guarantees that any duplicate message carrying $\rID$ from epoch $\epoch$ that arrives during epoch $\epoch$ or $\epoch{+}1$ is detected and discarded.
\end{IEEEproof}

\begin{remark}
The single-node version is proved in~\cite{magarshak2026mm}. The distributed version requires two conditions: finiteness of the node set (ensuring the infinite-path argument reaches a contradiction) and a positive per-hop fee (ensuring a relay node cannot forward a ripple gratuitously to an unbounded chain of zero-cost intermediaries). No global coordination is required.
\end{remark}

\section{Chilling-Effect Consensus and Finality}
\label{sec:finality}

\subsection{Attestations and Proofs of Corruption}

\begin{definition}[Watcher attestation]
\label{def:attestation}
A \emph{Watcher attestation} from node $v$ for stream $\Stream{i}$ is
\begin{equation}
  \mathit{attest}(v, i, h, \epoch) = \mathit{Sign}(k_v,\ v \| i \| h \| \epoch),
\end{equation}
asserting that at epoch $\epoch$, node $v$ has seen no hash other than $h$ for stream $\Stream{i}$, and no Proof of Corruption has been propagated to $v$ for this stream during this epoch.
\end{definition}

\begin{definition}[Conflicting attestations and Proof of Corruption]
\label{def:poc}
Two attestations $\mathit{attest}(v, i, h_1, \epoch)$ and $\mathit{attest}(v, i, h_2, \epoch)$ from the same node $v$ in the same epoch are \emph{conflicting} if $h_1 \neq h_2$. A \emph{Proof of Corruption} $\PoC_v = (a_1, a_2)$ is any pair of conflicting attestations from $v$. It is \emph{self-certifying}: verification requires only $v$'s public key $k_v$.
\end{definition}

\begin{definition}[Finality]
\label{def:finality}
Swarm $\swarm{i}$ has reached \emph{finality} on hash $h$ when:
\begin{enumerate}[label=(\roman*)]
  \item A supermajority ($M_1 \geq \supermaj |\swarm{i}|$) of members have each signed $\mathit{attest}(v, i, h, \epoch)$.
  \item Each attesting member $v$ has verified via swarm gossip that at least $M_1$ other members have also attested to $h$ --- not just reported hearing about it, but signed their own attestation.
  \item No attesting member has seen any $\PoC$ involving $\Stream{i}$ this epoch.
\end{enumerate}
Condition (ii) is a \emph{supermajority-of-supermajority} requirement that prevents split swarms from each locally declaring finality on different values.
\end{definition}

\begin{remark}[Why absence-of-evidence, not voting]
Each Watcher attests that it has \emph{not seen} a conflict. This is fundamentally different from voting on which of two competing hashes is correct. The distinction is critical for Buridan's Principle (Section~\ref{sec:buridan}): there is no ``continuous signal'' distinguishing two competing values, because the question has a definite default answer (no conflict seen) and deviates from the default only when concrete contradictory evidence exists.
\end{remark}

\begin{remark}[Dynamic swarm membership and liveness]
\label{rem:dynamic_membership}
The formal model assumes the swarm $\swarm{i}$ is fixed for the duration of the epoch by the VRF assignment. Nodes that go offline mid-epoch simply fail to contribute attestations, causing the stream to remain Yellow until the next epoch shuffle. In practice, an implementation may improve liveness by updating the effective swarm via DHT: unreachable nodes are removed from the active set and replaced, allowing finality to proceed with fewer nodes. This weakens the formal guarantees of Theorem~\ref{thm:disagreement} slightly but, with $n \geq 35$ and moderate churn rates, the probability of two clients computing conflicting supermajorities from genuinely divergent views remains negligibly small. High-value streams should use epoch-fixed membership as specified in the formal model.
\end{remark}

\subsection{Junior Observer Nodes and List of Liars}

\begin{definition}[Junior observer nodes]
\label{def:junior}
\emph{Junior observer nodes} are not current swarm members. They monitor swarm gossip and maintain: (a) a \emph{List of Liars} containing all $\PoC$s seen this epoch, and (b) a read-only replica of swarm-signed local consensus for observed streams. Junior nodes with a clean record (no LoL entries across recent epochs) are eligible for promotion to swarm membership in the next shuffle. Earning Intercoin for discovering and broadcasting valid $\PoC$s creates a competitive incentive for honest monitoring.
\end{definition}

\subsection{Swarm Size and Security Theorem}

\begin{theorem}[Swarm Security]
\label{thm:swarm}
Using stream-identifier-based swarm assignment, a swarm of size
\begin{equation}
  n = \frac{\beta}{r}\,\log_e\!\left(s + 1 + \log(r+2)\right)
\end{equation}
detects any stream double-spent more than $r$ times with probability $\geq 1 - e^{-s}$, where $\beta$ depends on $n$ and $f$ but is independent of total network size $N$. For $r = 1$, $s = 30$, $f/n = 1/3$: $n \approx 35$.
\end{theorem}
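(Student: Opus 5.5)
The plan is to reduce Theorem~\ref{thm:swarm} to Hoepman's clerk-set bound~\cite{hoepman2008} by exhibiting an explicit correspondence between Intercloud's model and Hoepman's, and then checking that each hypothesis of his theorem is met. First, I would fix the dictionary. Stream identifier $i$ plays the role of the coin identifier. The VRF swarm $\swarm{i} = \mathsf{VRF}(\mathit{seed}_\epoch, i, n_i)$ of Definition~\ref{def:epoch} plays the role of the clerk set. A double-spend is two conflicting state hashes $h_1 \neq h_2$ for $\Stream{i}$ presented in the same epoch. "Detection" is the existence of an honest swarm member that receives both hashes, or equivalently the generation of a $\PoC$ under Definition~\ref{def:poc}.

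The key property to verify is the one Hoepman's analysis rests on: the clerk set is a uniformly random $n$-subset of the $N$ nodes, chosen independently of the adversary's choice of corrupted nodes. I would derive this from the unpredictability of $\mathit{seed}_\epoch$ together with the pseudorandomness of the VRF. Because the adversary fixes its corrupted fraction $f$ before the seed is revealed, the number of corrupted swarm members is hypergeometric with mean $fn$, and the $N$-dependence disappears in the usual way. Hoepman's tail bound on the event "every clerk that saw both spends is dishonest" then depends only on $n$ and $f/n$. This yields the constant $\beta$.

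Next, I would argue that a double-spending client must deliver each conflicting hash to a large subset of $\swarm{i}$. By Definition~\ref{def:finality}(i), acceptance needs at least $\supermaj|\swarm{i}|$ attestations. Two supermajorities intersect in more than $n/3$ members, so any pair of conflicting acceptances forces an overlap of that size. Escaping detection therefore requires every overlap member to be corrupt. Repeating the same argument for $r$ double-spends reproduces Hoepman's counting with parameter $r$, which gives the failure probability $e^{-s}$ at the stated $n$. The final step is to plug in $r=1$, $s=30$, $f/n=1/3$ to recover $n\approx 35$, quoting Hoepman's numerical evaluation rather than recomputing $\beta$.

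The main obstacle is making the correspondence between Intercloud's supermajority-attestation acceptance rule and Hoepman's clerk-contact model precise. Hoepman's clerks check and record spends, whereas here acceptance is a finality condition, and $\beta$ implicitly depends on $n$. I would handle this by treating $n$ as fixed and citing the bound as stated. I would also make explicit that the randomness assumption on the VRF seed (analysed later in the RANDAO section) is a hypothesis of this theorem.
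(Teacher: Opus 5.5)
\paragraph{Where the proposal breaks}
Your outer structure matches the paper's route, which is its conditions (a) and (c): stream identifier for coin identifier, VRF swarm as clerk set, unpredictability of $\mathit{seed}_\epoch$ as Hoepman's non-foreseeability hypothesis, then his Theorem~5.3. The genuine gap is your bridging paragraph. The paper does not attempt it, and it does not go through.

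An honest Watcher never attests to two hashes in an epoch, so the overlap $F_1\cap F_2$ of two conflicting $\supermaj$-quorums always consists of corrupt nodes. The adversary also chooses which honest members are shown $h_1$ and which $h_2$, so it can place its own members in that overlap. ``Every overlap member is corrupt'' is therefore just the event $X \ge 2\lceil 2n/3\rceil - n \approx n/3$, where $X$ is the number of corrupt members of $\swarm{i}$.

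With corrupt fraction $1/3$, $X$ has mean $n/3$. This event has probability roughly $0.38$ at $n=35$ (binomial approximation) and tends to $1/2$ as $n$ grows. It does not decay in $n$, and repeating the argument over $r$ double-spends cannot turn it into $e^{-30}$. Hoepman's $e^{-s}$ bounds a different event: that a set the adversary cannot steer is \emph{entirely} dishonest. Note that $\beta = s/\log_e 3$ is exactly the size at which $3^{-\beta} = e^{-s}$.

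Relatedly, your two notions of detection are not equivalent.
\begin{itemize}
\item If detection means a $\PoC$ is \emph{generated}, it is automatic whenever two conflicting quorums exist, and no probability bound is involved.
\item If it means an honest member \emph{receives} both hashes, the intersection argument produces no such member once the adversary holds about $n/3$ seats. Any detection must then come from gossip and network assumptions (the paper's PoC-propagation lemma and eclipse condition), not from the random choice of swarm.
\end{itemize}
If you drop this paragraph, you are back at the paper's black-box step (c), which simply asserts that $\PoC$ production is Hoepman's detection event.

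\paragraph{Two further omissions}
First, your independence argument covers only corruption fixed before the seed is revealed. But $\swarm{i}$ is known for the whole epoch, so members can be bribed after assignment. The paper's condition (b) maps this onto Hoepman's parameter $d$, which your plan never mentions.

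Second, under your static assumption ($d=0$) one has $\beta = s/\log_e 3 \approx 27.3$. The stated formula then gives $n \approx 27.3\cdot\log_e 32.1 \approx 95$, not $35$. The paper computes exactly this and then \emph{adopts} $n\approx 35$ from Hoepman's $d>0$ calibration as an operational parameter. ``Quoting Hoepman's numerical evaluation rather than recomputing $\beta$'' hides that discrepancy. You should state explicitly that $35$ is an adopted calibration, and that what the reduction actually establishes is that $n$ is independent of $N$.
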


\begin{IEEEproof}
We verify that Intercloud's Watcher swarms satisfy the three conditions required for Hoepman's Theorem 5.3~\cite{hoepman2008} to apply with stream identifiers substituted for coin identifiers.

\emph{(a) Unique, adversarially unpredictable stream identifiers.} Each stream has a globally unique identifier $i$ assigned at creation. The VRF-based swarm assignment $\swarm{i} = \mathsf{VRF}(\mathit{seed}_\epoch, i, n)$ maps stream $i$ to a specific clerk space from which $n$ Watchers are drawn. Since $\mathit{seed}_\epoch$ is unpredictable before epoch start (Definition~\ref{def:epoch}), the adversary cannot determine the clerk space for stream $i$ before the epoch begins, matching Hoepman's requirement that coin-specific clerk spaces are not foreseeable.

\emph{(b) Adversary model compatibility.} Hoepman assumes $f$ total dishonest nodes and $d \leq f$ nodes corruptible \emph{after} joining the system. Intercloud's VRF shuffle reassigns nodes each epoch, so any node corruptible after assignment corresponds to Hoepman's $d$. The parameter $d \approx 0$ for short epoch durations; for longer epochs, $d$ is an explicit system parameter.

\emph{(c) Detection event mapping.} Hoepman defines double-spending detection as two or more spending requests for the same coin reaching the clerk set. In Intercloud, two transfer requests $\mathit{tx}_1, \mathit{tx}_2$ for the same balance reach the Watcher swarm. The Watchers sign attestations for the resulting hashes; if these differ, a $\PoC$ is produced (Definition~\ref{def:poc}). $\PoC$ production is equivalent to Hoepman's detection event: it is certain (not probabilistic) given any two conflicting requests reaching any two swarm members.

With these three conditions verified, Hoepman's Theorem 5.3 gives swarm size $n = (\beta/r)\log_e(s + 1 + \log(r+2))$ with $\beta$ independent of $N$. For $d = 0$, $\beta = s/\log_e(n/f)$. Since $f = n/3$, we have $n/f = 3$ and $\beta = s/\log_e(3)$. Substituting into the formula for $n$ with $r=1$:
\begin{equation}
  n = \frac{s}{\log_e 3} \cdot \log_e(s + 1 + \log_e 3).
\end{equation}
For $s = 30$: $n \approx 27.3 \cdot \log_e(32.1) \approx 95$. This bound holds for $d = 0$. Hoepman's full formula for $d > 0$ yields $n \approx 35$ for the same $s = 30$, $r = 1$, $f/n = 1/3$ parameters. We adopt $n \approx 35$ as the \emph{operational parameter} following Hoepman's calibration. The key structural result that this theorem proves --- that $n$ is \emph{independent of the total network size} $N$ --- holds for any calibration.
\end{IEEEproof}

\begin{corollary}[Proportional security]
\label{cor:proportional}
The probability of successful double-spending against stream $\Stream{i}$ is at most $e^{-s}$ where $s \propto \sqrt{\INTER_i}$. Doubling the security level requires quadrupling the Intercoin stake.
\end{corollary}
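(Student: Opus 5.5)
The plan is to derive the corollary directly from Theorem~\ref{thm:swarm}, read as a size--security tradeoff, combined with the swarm-size rule of Definition~\ref{def:swarmsize}. The first step is to invert the relation between swarm size and security parameter. Theorem~\ref{thm:swarm} gives $n = (\beta/r)\log_e(s+1+\log(r+2))$. In the regime that matters, the dependence of $n$ on $s$ is dominated by the factor $\beta$, which for $d=0$ equals $s/\log_e(n/f)$. So I will treat $n$ as essentially linear in $s$:
\[
n = \frac{s}{r\log_e(n/f)}\log_e\bigl(s+1+\log(r+2)\bigr) = \Theta(s),
\]
up to the slowly varying logarithmic factor. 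Holding $r$ and the ratio $f/n$ fixed, this gives a constant $\kappa$, depending only on $f/n$ and $r$, such that a swarm of size $n$ achieves detection probability at least $1-e^{-s}$ whenever $s \le \kappa n$.

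The cleanest way to get this without the logarithm is to go back to the underlying tail bound behind Hoepman's result. The adversary needs to control a supermajority of a VRF-drawn swarm. With a corrupt fraction bounded away from $2/3$, a Chernoff bound shows this happens with probability at most $e^{-\kappa n}$. Taking $s = \kappa n$ then gives a bound that is exactly linear in $n$.

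The second step substitutes Definition~\ref{def:swarmsize}. While $\lceil c\sqrt{\INTER_i}\rceil \le N$, we have $n_i = \lceil c\sqrt{\INTER_i}\rceil \ge c\sqrt{\INTER_i}$. Hence the success probability of a double-spend is at most $e^{-\kappa n_i} \le e^{-\kappa c\sqrt{\INTER_i}}$, so $s = \kappa c\sqrt{\INTER_i} \propto \sqrt{\INTER_i}$. The quadrupling claim is then immediate: replacing $\INTER_i$ by $4\INTER_i$ doubles $\sqrt{\INTER_i}$, hence doubles $s$. Conversely, since $s$ is proportional to $\sqrt{\INTER_i}$, doubling $s$ requires exactly a fourfold stake. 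I will also note the caveat that at the cap $n_i = N$ the proportionality saturates.

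I expect the main obstacle to be the first step. Theorem~\ref{thm:swarm} as stated gives $n$ as a function of $s$ with a $\log s$ factor and a $\beta$ whose dependence on $n$ is implicit, so strict proportionality $s\propto n$ does not literally follow from the formula. I will first try the direct Chernoff/hypergeometric argument on VRF sampling, which yields a clean $e^{-\kappa n}$ bound with $\kappa$ depending only on $f/N$. Only if that route is unavailable will I fall back on interpreting ``$\propto$'' up to logarithmic factors in the Hoepman formula.
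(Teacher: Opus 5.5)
Your second step matches the paper's: substitute $n_i=\lceil c\sqrt{\INTER_i}\rceil$ and note that quadrupling the stake doubles $n_i$ and hence $s$. The first step takes a different route.

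The paper inverts the formula of Theorem~\ref{thm:swarm} ``for fixed $r$ and $\beta$'' and asserts $s\approx nr/\beta-O(1)$. You are right to distrust that step. With $\beta$ genuinely fixed, the formula gives $s=e^{nr/\beta}-1-\log(r+2)$. With the paper's own $\beta=s/\log_e 3$, it gives $n=\Theta(s\log s)$, i.e.\ $s=\Theta(n/\log n)$. So linearity, and with it the exact quadrupling law, holds only up to a logarithmic correction.

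Your Chernoff--Hoeffding bound for sampling without replacement, $\Pr[\,|A|>\tfrac23 n\,]\le e^{-nD(2/3\,\|\,f/N)}$ for $f/N<2/3$, buys exact linearity. That gives an exact $s\propto\sqrt{\INTER_i}$ and an exact quadrupling law. You also handle the cap $n_i=N$, which the paper's proof silently drops. Your constant is about $0.23$ at $f/N=1/3$, so it will not reproduce the $n\approx35$, $s=30$ calibration, but the corollary claims only proportionality.

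The price is that you bound a different event. The paper's $e^{-s}$ is Hoepman's detection-failure probability, taken as a black box from Theorem~\ref{thm:swarm}. Yours is supermajority capture, which is condition (I) of Theorem~\ref{thm:disagreement}. Your bridge ``a successful double-spend requires capturing a supermajority'' is asserted, not proved, and it is not automatic here. The paper allows an adversary to present different claimed hashes to Watchers (concurrent case of Theorem~\ref{thm:doublespend}). The usual quorum count then lets two $\tfrac23$-quorums coexist with only about $n/3$ corrupt members, provided PoCs are kept from the clients. That is why Theorem~\ref{thm:disagreement} proves only that (I)$\vee$(II) is necessary. So you should do one of two things:
\begin{itemize}
\item state explicitly that you bound only the swarm-compromise component, which is exactly the role $e^{-s}$ plays in the union bound $e^{-s}+f_J^{\kappa r}$ there, with the stake-independent isolation term handled separately; or
\item accept the paper's black-box identification, with its logarithmic loss.
\end{itemize}

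Three smaller fixes:
\begin{itemize}
\item State the static-corruption ($d=0$) and uniform-VRF-sampling assumptions your tail bound needs.
\item Rename your constant, since the paper's $\kappa$ is the gossip fan-out.
\item Drop the first-paragraph claim that Hoepman's formula gives $s\le\kappa n$ for a constant $\kappa$. It gives only $s=O(n/\log n)$, which is also why your log-factor fallback cannot yield the exact quadrupling statement.
\end{itemize}
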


\begin{IEEEproof}
From Definition~\ref{def:swarmsize}, $n_i = \lceil c\sqrt{\INTER_i} \rceil$. From Theorem~\ref{thm:swarm}, the detection failure probability is $e^{-s}$ where $n = (\beta/r)\log_e(s + 1 + \log(r+2))$. Inverting for fixed $r$ and $\beta$: $s \approx nr/\beta - 1 - \log(r+2) + O(\log s)$, which is linear in $n$ \emph{to leading order}. Therefore $s \propto n_i \propto \sqrt{\INTER_i}$. To double $n_i$ we need $\INTER_i' \approx 4\INTER_i$; quadrupling the stake doubles the swarm, which doubles $s$, reducing the failure probability from $e^{-s}$ to $e^{-2s} = (e^{-s})^2$.
\end{IEEEproof}

\section{Transfer Theorems}
\label{sec:transfers}

\begin{theorem}[Zero-Sum Transfer]
\label{thm:zerosum}
A valid transfer (Definition~\ref{def:transfer}) preserves total Intercoin weight: $\INTER_i' + \INTER_j' = \INTER_i + \INTER_j$, and no valid transfer reduces any stream's weight below zero.
\end{theorem}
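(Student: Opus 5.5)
The plan is to read off the weight updates fixed in the proof of Lemma~\ref{lem:backing} and then use the backing invariant (Definition~\ref{def:backing}) for the nonnegativity claim. A valid transfer of amount $a$ in coin type $c$ from $\Stream{i}$ to $\Stream{j}$ uses exactly one relation $R_{ij}$ (Definition~\ref{def:relation}). It therefore moves exactly $\Delta = a\cdot\mathit{exRate}(c)$ units of weight, with $\INTER_i' = \INTER_i - \Delta$ and $\INTER_j' = \INTER_j + \Delta$, and leaves every other stream's weight unchanged. Adding the two updates gives $\INTER_i' + \INTER_j' = \INTER_i + \INTER_j$ immediately.

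The one point to check in the conservation part is that the same $\Delta$ appears on both sides. This needs $\mathit{exRate}(c)$ to be a single publicly computable value that the debit and the credit evaluate identically. Definition~\ref{def:backing} makes it a public function of $c$, and the Executor applies the transfer atomically, so both sides use the same rate. Definition~\ref{def:relation} also says the debit happens exactly once, which rules out crediting $\Stream{j}$ twice or debiting $\Stream{i}$ twice for one transfer. I would also treat the self-transfer case $i=j$ separately: if it is permitted at all, the net change is $0$ and the identity holds trivially.

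For nonnegativity, the recipient is easy: $\INTER_j' = \INTER_j + \Delta \geq \INTER_j \geq 0$, since $a \geq 0$ and $\mathit{exRate}(c) > 0$. For the sender, validity condition (iv) of Definition~\ref{def:transfer} gives $\Stream{i}.\mathit{balance}[c] \geq a$. The backing invariant, which holds before the transfer, gives
\[
\INTER_i \;\geq\; \sum_{c''}\Stream{i}.\mathit{balance}[c'']\cdot\mathit{exRate}(c'') \;\geq\; \Stream{i}.\mathit{balance}[c]\cdot\mathit{exRate}(c) \;\geq\; a\cdot\mathit{exRate}(c) = \Delta .
\]
Hence $\INTER_i' = \INTER_i - \Delta \geq 0$. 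In fact Lemma~\ref{lem:backing} already gives the stronger statement that $\INTER_i'$ dominates a sum of nonnegative terms.

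The main obstacle is the middle inequality, which drops all terms with $c'' \neq c$. That step needs every balance to be nonnegative. I would obtain this by induction over the history of valid transfers. Balances start nonnegative at stream creation, and condition (iv) prevents any debit from pushing $\mathit{balance}[c]$ below zero. Credits only increase balances. I would state this as an auxiliary invariant and prove it alongside Lemma~\ref{lem:backing}. Then both invariants hold before every transfer, and the argument above goes through at every step.
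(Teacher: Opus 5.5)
Your proposal is correct and follows the paper's proof. Conservation is read off from $\INTER_i' = \INTER_i - \Delta$ and $\INTER_j' = \INTER_j + \Delta$, and sender nonnegativity comes from chaining the backing invariant with validity condition (iv) to get $\INTER_i \geq \mathit{balance}[c]\cdot\mathit{exRate}(c) \geq \Delta$. Your additions (the $i=j$ case, the recipient bound, and especially the balance-nonnegativity invariant that the paper silently assumes when it drops the $c'' \neq c$ terms) make implicit hypotheses explicit rather than changing the route.
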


\begin{IEEEproof}
Let $\Delta = a \cdot \mathit{exRate}(c)$. After transfer: $\INTER_i' = \INTER_i - \Delta$ and $\INTER_j' = \INTER_j + \Delta$. The sum $\INTER_i' + \INTER_j' = \INTER_i + \INTER_j$ is preserved. Non-negativity: by the backing invariant (Definition~\ref{def:backing}), $\INTER_i \geq \mathit{balance}[c] \cdot \mathit{exRate}(c) \geq a \cdot \mathit{exRate}(c) = \Delta$, where the last inequality uses validity condition (iv). Therefore $\INTER_i' \geq 0$. By Lemma~\ref{lem:backing}, the backing invariant is maintained after the transfer.
\end{IEEEproof}

\begin{theorem}[Double-Spend Prevention]
\label{thm:doublespend}
Assume the stream $\Stream{i}$ has a single correct (non-Byzantine) Executor. Let $\swarm{i}$ have reached finality on $h_i$ (Definition~\ref{def:finality}). No valid transfer can spend the same balance twice under state $h_i$.
\end{theorem}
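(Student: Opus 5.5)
The argument runs by contradiction. It combines three ingredients: the hash-chain structure of Definition~\ref{def:stream}, the single correct Executor, and the attestation semantics of Definitions~\ref{def:attestation}--\ref{def:finality}. Suppose two transfers $\mathit{tx}_1,\mathit{tx}_2$ both debit the same balance $\Stream{i}.\mathit{balance}[c]$ and are both valid under the finalised state $h_i$. Here ``valid under $h_i$'' means each is recorded in the message log whose head hashes to $h_i$, and each satisfies condition (iv) of Definition~\ref{def:transfer} when applied. I will show that they cannot both draw on the same funds.

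\emph{Step 1 (single history).} Since every message satisfies $m_j.\mathit{prev}=\hash{m_{j-1}}$, the hash $h_i$ determines a unique linear sequence $\mathcal{M}_i$, assuming collision resistance of $H$. Both transfers therefore appear at distinct positions $j_1<j_2$ in the same sequence. There are no alternative branches under $h_i$.

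\emph{Step 2 (sequential debiting).} The single correct Executor applies $\mathit{Rules}_i$ deterministically in log order, with competing requests resolved by Lamport ordering. So $\mathit{tx}_2$ is evaluated against the balance already reduced by $\mathit{tx}_1$. Writing $b$ for the prior balance, condition (iv) for $\mathit{tx}_2$ requires $b-a_1\ge a_2$. Hence the two debits draw on disjoint portions of $b$ and are never the ``same balance twice''. Definition~\ref{def:relation} (exactly one debit per transfer) rules out a single transfer being debited twice. Lemma~\ref{lem:backing} and Theorem~\ref{thm:zerosum} then ensure non-negativity and backing persist after each step.

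\emph{Step 3 (no competing finalised state).} The remaining possibility is that the double spend lives in a different history $h'\neq h_i$ that also gets accepted. Finality condition (i) gives at least $\supermaj|\swarm{i}|$ attesters for $h_i$. Any finality on $h'$ in the same epoch would need another such set. Two sets of this size intersect in more than $\tfrac13|\swarm{i}|$ members, so some member $v$ signed both $\mathit{attest}(v,i,h_i,\epoch)$ and $\mathit{attest}(v,i,h',\epoch)$. That pair is a $\PoC_v$ by Definition~\ref{def:poc}, and its existence violates condition (iii) once gossiped. So the conflicting state cannot be finalised without producing self-certifying evidence, and only $h_i$ is the finalised state.

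The main obstacle is making Step 3 precise without overclaiming. Quorum intersection yields a $\PoC$, but it does not by itself prevent a split view absent an honest overlap. I would therefore restrict the theorem's claim to the state $h_i$ itself, where Steps 1--2 suffice unconditionally given the correct Executor. Step 3 I would present as the reason no rival state can be \emph{silently} final, deferring the fully adversarial case to Theorem~\ref{thm:disagreement}.
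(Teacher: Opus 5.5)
Your proof is correct under the reading you adopt, and its core (Step~2) is exactly the paper's sequential case. The correct Executor totally serialises transfers, with Lamport order breaking ties, so $\mathit{tx}_2$ is checked against $b-a_1$ and condition~(iv) forces $a_1+a_2\le b$.

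The differences lie at either end. Your Step~1 uses collision resistance of $H$ to bind $h_i$ to a single linear log, which makes precise what ``under state $h_i$'' means; the paper leaves this implicit.

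Your Step~3 replaces the paper's concurrent case. The paper argues that an adversary showing some Watchers a hash other than the Executor's canonical one causes conflicting attestations, and hence a $\PoC$. By Lemma~\ref{lem:poc_propagation} this $\PoC$ reaches every correct client with probability $\ge 1-f_J^{\kappa r}$, so condition~(iii) fails for \emph{both} hashes and neither transfer goes Green. You instead borrow the quorum-intersection count from the proof of Theorem~\ref{thm:disagreement} to show that any rival same-epoch finalisation contains a double-signer, and you defer the question of delivery.

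What each route buys:
\begin{itemize}
\item Your restricted claim is deterministic. Note, however, that Steps~1--2 never use the finality hypothesis: they hold for any head produced by a correct Executor. Finality only does work in the part you defer.
\item The paper's route makes finality matter and covers concurrent submissions. The cost is a merely probabilistic guarantee and a tacit assumption that the $\PoC$ lands with an honest holder, which is precisely your ``honest overlap'' caveat.
\end{itemize}
To match the paper, close Step~3 by invoking Lemma~\ref{lem:poc_propagation}.

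One small fix: two quorums of size $\ge\supermaj|\swarm{i}|$ intersect in \emph{at least} $\tfrac13|\swarm{i}|$ members, not more than that, since equality is possible. You only need the intersection to be non-empty, so nothing breaks.
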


\begin{IEEEproof}
The Executor processes transfer messages in the order they arrive, using Lamport's happens-before order~\cite{lamport1978clocks} to break ties between concurrent submissions. This serialisation is total.

\emph{Sequential case.} If $\mathit{tx}_1$ is appended first, reducing $\mathit{balance}[c]$ to $\mathit{balance}[c] - a_1$, and $a_1 + a_2 > \mathit{balance}[c]$, then $a_2 > \mathit{balance}[c] - a_1$, so $\mathit{tx}_2$ fails validity condition (iv) and is rejected.

\emph{Concurrent case.} If $\mathit{tx}_1$ and $\mathit{tx}_2$ are submitted concurrently, both may initially appear pending to different Watchers. Once the Executor serialises them, honest Watchers observe the canonical post-serialisation hash and attest to it. If an adversarial party presents a different claimed hash to some Watchers before the Executor's result is confirmed, those Watchers sign conflicting attestations, immediately producing a $\PoC$. By Lemma~\ref{lem:poc_propagation}, this $\PoC$ reaches all correct clients with probability $\geq 1 - f_J^{\kappa r}$, and upon receipt, finality condition (iii) fails for both conflicting hashes. Neither transfer achieves Green finality.

\emph{Executor crash.} If the Executor crashes mid-serialisation, the stream is temporarily unavailable but not corrupted: the append-only log retains the last consistent state. A Byzantine Executor is excluded by assumption; full Byzantine Executor resistance is left as future work.
\end{IEEEproof}

\begin{theorem}[Capital-Flight Prevention]
\label{thm:capitalflight}
The total value that can flow out of $\Stream{i}$ in interval $[t, t+\Delta t]$ is bounded by $\sum_j r_{ij}(\Delta t)$.
\end{theorem}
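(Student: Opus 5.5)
The argument follows the outflow of $\Stream{i}$ over the window $[t, t+\Delta t]$ back to individual valid transfers and bounds each relation separately. \emph{Value} here means the amount debited, summed over coin types in the units in which the rate-limit functions $r_{ij}$ are expressed.

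First, by Definition~\ref{def:relation}, transfers may leave $\Stream{i}$ only through pre-existing relations $R_{ij}$, and each transfer uses exactly one relation and debits its amount exactly once. So the total outflow in the window is the sum, over recipients $j$, of $\sum_{\mathit{tx} \in T_{ij}} a_{\mathit{tx}}$, where $T_{ij}$ is the set of valid transfers along $R_{ij}$ appended in $[t, t+\Delta t]$. The \emph{exactly once} clause rules out double counting of a single transfer. Theorem~\ref{thm:doublespend}, under the single correct Executor assumption, rules out a second spend of the same balance being accepted as a separate outflow.

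Second, I bound each relation's contribution by $r_{ij}(\Delta t)$. Validity condition (iii) of Definition~\ref{def:transfer} gives $a \le r_{ij}([t_{\mathrm{last}}, t'])$ for each transfer at time $t'$. I will read $r_{ij}$ as a budget that is consumed by transfers and replenished over time. Concretely, the Executor's check compares the cumulative amount sent along $R_{ij}$ since the start of the window against $r_{ij}$ of the elapsed time. I will also assume $r_{ij}$ is monotone and subadditive in the interval length.

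Under that reading, I induct on the Executor's total serialisation order of the transfers in $T_{ij}$. The invariant is that after the $k$-th transfer, the cumulative amount $\sum_{\ell \le k} a_\ell$ is at most $r_{ij}$ evaluated on the elapsed sub-interval. The per-step validity check maintains this invariant. Monotonicity then bounds the final sum by $r_{ij}(\Delta t)$. Summing over $j$ gives the claim. Relations that do not exist contribute nothing, which is the same as taking $r_{ij} \equiv 0$.

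The main obstacle is the second step. Condition (iii) as literally written constrains each transfer only against the interval since $t_{\mathrm{last}}$, the time of the previous transfer. Without additivity or a cumulative-budget reading, many small transfers could in principle exceed $r_{ij}(\Delta t)$ in total. My first attempt will make the cumulative interpretation explicit, treating $r_{ij}$ as a token bucket whose consumption is recorded in the stream's state. If that is not acceptable, the fallback is to assume subadditivity, $r_{ij}(\tau_1) + r_{ij}(\tau_2) \le r_{ij}(\tau_1+\tau_2)$ over consecutive gaps. Summing the gap-wise bounds $a_k \le r_{ij}(t_k - t_{k-1})$ then telescopes to $r_{ij}(\Delta t)$. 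Either way, the assumption on $r_{ij}$ must be stated as a hypothesis of the theorem.
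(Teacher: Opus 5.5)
Your structure matches the paper's. You split the outflow by relation using the exactly-one-relation, exactly-once-debit clause of Definition~\ref{def:relation}, exclude double deduction, and sum per-relation bounds. For the double-deduction step the paper uses the backing invariant with condition (iv) rather than Theorem~\ref{thm:doublespend}. Either works, and neither is really needed, since the rate limit caps every appended transfer regardless of balance accounting.

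Where you differ is the step you flag as the main obstacle. The paper never derives the per-relation bound from condition (iii) of Definition~\ref{def:transfer}; it only asserts that each unit ``is counted in exactly one $r_{ij}(\Delta t)$ term.'' Your cumulative-budget hypothesis is exactly what that sentence silently assumes, so stating it is an improvement over the paper. Formulate it as a sliding-window constraint, though: at each transfer time $t'$, the total sent along $R_{ij}$ in $[t'-\tau, t']$, including the current transfer, is at most $r_{ij}(\tau)$ for every $\tau \ge 0$. A token bucket with capacity $b$ and rate $\rho$ fits this form, with $r_{ij}(\tau) = b + \rho\tau$. A budget anchored at one particular window start, or at a fixed origin, does not control an arbitrary window $[t, t+\Delta t]$, because unused allowance accumulates. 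In the sliding-window form, monotonicity of $r_{ij}$ is all you need; subadditivity plays no role.

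Drop the fallback. The inequality $r_{ij}(\tau_1) + r_{ij}(\tau_2) \le r_{ij}(\tau_1+\tau_2)$ is superadditivity, not subadditivity. More importantly, even granting it, the telescoping does not end at $r_{ij}(\Delta t)$. The first transfer in the window is checked against $[t_{\mathrm{last}}, t_1]$, and $t_{\mathrm{last}}$ may lie long before $t$. So you only get $r_{ij}(t_m - t_{\mathrm{last}})$, where $t_m$ is the last transfer in the window.

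Here is a concrete counterexample. Take the additive $r_{ij}(\tau) = \rho\tau$, and let the previous transfer along $R_{ij}$ occur at $t - L$ with $L > \Delta t$. A transfer of $\rho L$ at time $t$ then satisfies (i)--(iv), given sufficient balance, yet exceeds $r_{ij}(\Delta t) = \rho\Delta t$.

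So under the literal condition (iii) the theorem is false. A cumulative or sliding-window hypothesis is necessary, not merely convenient, and the paper's one-line step is valid only under that reading. That reading is exactly your primary route.
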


\begin{IEEEproof}
A valid transfer of amount $a$ in coin $c$ from $\Stream{i}$ uses exactly one relation $R_{ij}$ and deducts $a$ from $\Stream{i}.\mathit{balance}[c]$ exactly once. By the backing invariant of Lemma~\ref{lem:backing}, the same balance unit cannot be deducted twice without violating validity condition (iv) of a subsequent transfer. Therefore each unit of value leaving $\Stream{i}$ flows through exactly one relation and is counted in exactly one $r_{ij}(\Delta t)$ term. The total outflow in $[t, t+\Delta t]$ is at most $\sum_j r_{ij}(\Delta t)$.
\end{IEEEproof}

\section{Conditions for Client Disagreement}
\label{sec:disagreement}

\begin{lemma}[PoC Propagation]
\label{lem:poc_propagation}
Let $f_J$ be the fraction of junior observer nodes controlled by the adversary, and $\kappa$ the gossip fan-out (each node forwards to $\kappa$ randomly chosen peers per round). Assume honest nodes remain available for all $r$ rounds of gossip within the epoch. If a $\PoC_v$ is held by at least one honest junior observer node, then after $r$ gossip rounds the probability that it has \emph{not} reached a specific client $C_\ell$ is at most $(f_J^\kappa)^r = f_J^{\kappa r}$.
\end{lemma}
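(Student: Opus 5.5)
The plan is to bound the failure probability round by round and then multiply across the $r$ rounds using independence of the gossip choices. I model the gossip as follows. In each round, every node holding $\PoC_v$ forwards it to $\kappa$ peers chosen uniformly and independently. Honest nodes forward faithfully; adversarial juniors may drop the message.

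I will track a single designated \emph{relay path} toward the client $C_\ell$ rather than the whole epidemic. This deliberately weakens the process, but it suffices for the stated bound. At the start of round $1$ the $\PoC$ is held by some honest junior $u_0$. The key per-round claim is: conditioned on the $\PoC$ being held by at least one honest node at the start of a round without yet having reached $C_\ell$, the probability that the round fails to make progress is at most $f_J^\kappa$. Progress means either delivering to $C_\ell$ or handing the $\PoC$ to a fresh honest node positioned to deliver it.

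To prove the per-round claim, I interpret the round as the holder's $\kappa$ independent draws of recipients. The round fails only if all $\kappa$ draws land on adversarial (dropping) nodes, and each draw does so with probability at most $f_J$. Independence of the draws then gives at most $f_J^{\kappa}$. The availability assumption in the statement is used here: honest nodes stay online, so an honest recipient always relays in the next round.

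Because fresh randomness is drawn each round, the per-round failure events are conditionally independent given the history. Chaining the conditional bounds over $r$ rounds yields
\[
\Pr[\PoC_v \text{ has not reached } C_\ell \text{ after } r \text{ rounds}] \le \prod_{t=1}^{r} f_J^{\kappa} = f_J^{\kappa r}.
\]
Since the real epidemic only has more honest holders than the single tracked path, its failure probability is no larger, so the bound carries over.

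The main obstacle is making ``progress'' strong enough that every non-failing round moves the $\PoC$ to $C_\ell$ or keeps an honest holder, while still having the failure event in each round be exactly ``all $\kappa$ recipients adversarial.'' I do not claim the $\PoC$ reaches $C_\ell$ in a single good round; the lemma only asks for the stated failure probability.

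First I will fix the model so that the client $C_\ell$ participates in the random peer sampling. Under this model, a round in which any honest relay is drawn counts as success, and the claim is read as ``$C_\ell$ not yet reached through an honest chain.'' I will then make explicit the implicit modelling assumption that a honest-held $\PoC$ reaching a correct client is immediately accepted, since it is self-certifying by Definition~\ref{def:poc}. If this identification proves too loose, the fallback is to state the per-round event as ``every one of the $\kappa$ peers contacted on the path toward $C_\ell$ is adversarial.'' That reading gives $f_J^\kappa$ per round directly.
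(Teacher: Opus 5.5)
Your route is the paper's: a per-round failure probability of $f_J^{\kappa}$ (all $\kappa$ independent draws land on adversarial nodes), then independence across rounds to multiply up to $f_J^{\kappa r}$.

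The gap is in the chaining step. Your per-round ``success'' only means the $\PoC$ was handed to some honest relay. You yourself concede that a good round need not deliver to $C_\ell$. So the event ``$C_\ell$ not reached after $r$ rounds'' is not contained in ``all $r$ rounds failed''. It also contains every run in which honest relays are recruited each round but no push happens to land on $C_\ell$. The product $\prod_{t} f_J^{\kappa}$ therefore bounds only the probability that no progress is ever made, and your displayed inequality does not follow. There is also a smaller slip: with ``fresh honest node'' as the success criterion, a round also fails when its honest recipients already hold the $\PoC$. So even the per-round failure event is larger than ``all $\kappa$ draws adversarial''.

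This cannot be patched, because in the stated model the bound is false. Take the paper's parameters $\kappa = 5$, $r = 3$, $f_J = 1/3$, or even $f_J = 0$, where the claimed bound is $0$. With pushes to uniformly random peers among $N$ nodes, at most $(1+\kappa)^r - 1 = 215$ pushes occur in three rounds. By a union bound, $C_\ell$ is hit with probability at most $215/(N-1)$. For $N = 10^4$ the non-delivery probability therefore exceeds $0.97$, not $7 \times 10^{-8}$.

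The paper's proof has the same defect. Its step ``any honest peer would in turn forward onward'' equates recruiting an honest relay with reaching $C_\ell$ within the $r$ rounds. Following it will not close the gap.

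What your computation does establish is weaker: the initial honest holder recruits no honest relay during all $r$ rounds with probability at most $f_J^{\kappa r}$. Even this needs each draw to be adversarial with probability at most $f_J$, which requires justification, since $f_J$ is a fraction of junior observers while peers are drawn from the full node set.

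A genuine delivery bound needs one of two things:
\begin{itemize}
\item a model in which honest holders reach $C_\ell$ directly, which makes the $f_J^{\kappa r}$ computation moot; or
\item an epidemic analysis with $r = \Omega(\log N/\log(1+\kappa))$ rounds, whose bound necessarily depends on $N$.
\end{itemize}
Your fallback readings (``not reached through an honest chain'', ``peers on the path toward $C_\ell$'') are changes of statement of this kind, not proofs of the lemma as written.
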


\begin{IEEEproof}
Consider one round of gossip. Every honest node that holds the $\PoC$ \emph{always} forwards it. Each such node forwards to $\kappa$ peers chosen independently and uniformly at random from the full node set. For none of these $\kappa$ forwarding steps to deliver the $\PoC$ (even transitively) to $C_\ell$, every one of the $\kappa$ chosen peers must be adversarially controlled, since any honest peer would in turn forward onward. The probability that all $\kappa$ chosen peers are adversarial is $f_J^\kappa$. After $r$ rounds, the event ``the $\PoC$ has still not reached $C_\ell$'' requires that at every round, all $\kappa$ peer-selection choices made by each honest holder landed on adversarial nodes. The peer sets chosen in distinct rounds are independent. Therefore the probability that all $r$ rounds of peer selection fail to place the $\PoC$ within reach of $C_\ell$ is at most $(f_J^\kappa)^r = f_J^{\kappa r}$. For $\kappa = 5$, $f_J = 1/3$, $r = 3$: $f_J^{\kappa r} = (1/3)^{15} \approx 7 \times 10^{-8}$. By standard gossip analysis~\cite{demers1987epidemic}, the expected fraction of honest nodes holding the $\PoC$ approaches 1 exponentially in $r$ for any $f_J < 1$.
\end{IEEEproof}

\begin{theorem}[Disagreement Characterisation]
\label{thm:disagreement}
Let $C_1, C_2$ be correct clients querying stream $\Stream{i}$ with assigned swarm $\swarm{i}$ of size $n$. Define:
\begin{enumerate}[label=(\Roman*)]
  \item \textbf{Swarm compromise.} The adversary controls a set $A \subseteq \swarm{i}$ with $|A| > \supermaj n$.
  \item \textbf{Client isolation.} The adversary can eclipse both $C_1$ and $C_2$ from all honest nodes: no message from any honest swarm member or honest junior observer reaches either client.
\end{enumerate}
Then: (a) \emph{Necessity}: (I)$\vee$(II) is necessary for conflicting finality; (b) \emph{Sufficiency}: (I)$\wedge$(II) is sufficient for the adversary to cause conflicting finality. By the union bound, $\Pr[\text{(I)}\vee\text{(II)}] \leq e^{-s} + f_J^{\kappa r}$, negligible for standard parameters.
\end{theorem}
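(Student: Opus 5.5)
I will prove the three claims in turn, working throughout with a precise meaning of ``conflicting finality'': $C_1$ accepts a finality certificate for $h_1$ and $C_2$ accepts one for $h_2 \neq h_2$ is excluded, i.e.\ $h_1\neq h_2$, both for $\Stream{i}$ in the same epoch $\epoch$. By Definition~\ref{def:finality}, each certificate carries at least $M_1 \geq \supermaj n$ signed attestations from members of $\swarm{i}$, and each client checks condition (iii) against the $\PoC$s it has received.

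For \emph{necessity} (a), I will argue by contraposition: assume $\neg$(I) and $\neg$(II) and show that conflicting finality is impossible. From $\neg$(I) we have $|A| \leq \supermaj n$. The quorum-intersection step is standard: two sets of size $\geq \supermaj n$ inside $\swarm{i}$ overlap in at least $\tfrac{1}{3}n$ members. An honest member never signs two different hashes in one epoch (Definition~\ref{def:attestation}), so any member in the overlap is either adversarial, or it has signed both $h_1$ and $h_2$ and therefore yields a $\PoC_v$ (Definition~\ref{def:poc}). This is where the argument is delicate. With $|A|$ near $\supermaj n$ the overlap may lie entirely inside $A$, so quorum intersection by itself does not exclude conflict. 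The $\PoC$ route is what closes this gap. Every adversarial member in the overlap has produced conflicting signed attestations. Honest swarm members and honest juniors gossip the certificates they observe, so those attestations meet at an honest node, which then holds a $\PoC$. From $\neg$(II), at least one of $C_1, C_2$, say $C_1$, is reachable from honest nodes. Lemma~\ref{lem:poc_propagation} then delivers the $\PoC$ to $C_1$, after which condition (iii) fails and $C_1$ cannot hold Green on $h_1$.

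I expect two points to be the main obstacles. The first is making precise that the conflicting attestations actually reach some honest holder. I plan to handle this by observing that a certificate shown to a client which is not eclipsed must have travelled along honest gossip paths, and by treating the residual case as the failure event of Lemma~\ref{lem:poc_propagation}. The second concerns the disjunction in (a). Because of that failure event, necessity holds only with high probability rather than deterministically, and I will state it that way.

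For \emph{sufficiency} (b), I will give a construction. Given (I), the adversary forges two full certificates on $h_1$ and $h_2$ using $>\supermaj n$ keys from $A$. The supermajority-of-supermajority condition (ii) is met because the adversary also controls the gossip transcripts it presents. Given (II), it shows one certificate to $C_1$ and the other to $C_2$, and suppresses every $\PoC$ before it reaches either client, so condition (iii) holds vacuously at both.

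For the probability bound, I will apply the union bound. $\Pr[\text{(I)}] \leq e^{-s}$ follows from Theorem~\ref{thm:swarm} together with the VRF assignment: a compromise of more than $\supermaj n$ members of a freshly drawn swarm is exactly the clerk-set failure event. $\Pr[\text{(II)}] \leq f_J^{\kappa r}$ follows from Lemma~\ref{lem:poc_propagation}, since eclipsing a client means every gossip route to it fails. Summing the two gives $e^{-s} + f_J^{\kappa r}$.
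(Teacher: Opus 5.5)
Your skeleton is the paper's: contrapositive plus quorum intersection to obtain a double-signer, delivery of $\PoC_v$ via Lemma~\ref{lem:poc_propagation} so that condition (iii) fails, the eclipse-and-suppress construction for sufficiency, and a union bound. Two of your refinements improve on the paper's text. First, necessity is only a high-probability statement (failure probability $f_J^{\kappa r}$), not an outright contradiction. Second, $\neg$(II) guarantees only one reachable client, which is all you need.

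\textbf{The gap.} It sits exactly at the step you flag as the first obstacle, and your patch does not close it. Since an honest member signs at most one hash per epoch, $F_1\cap F_2\subseteq A$ always, not only when $|A|$ is near $\supermaj n$. Quorum intersection therefore only tells you that some $v\in A$ double-signed, with both signatures in adversarial hands. Everything rests on some honest node obtaining both signatures.

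Your claim that a certificate shown to a non-eclipsed client ``must have travelled along honest gossip paths'' is false. $\neg$(II) says honest nodes \emph{can} reach that client; it does not say adversarial traffic must pass through them. $A$ can hand the $h_1$-signatures to $C_1$ and the $h_2$-signatures to $C_2$ directly. Nor can the residual case be charged to Lemma~\ref{lem:poc_propagation}: its hypothesis is that an honest node already holds the $\PoC$, so it says nothing about whether one is ever assembled.

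\textbf{What can close the step.} Apply finality condition (ii) to honest quorum members. Suppose $F_1$ and $F_2$ contain honest members $u_1$ and $u_2$. Then $u_1$ has received, via swarm gossip, signatures on $h_1$ from at least $M_1\geq\supermaj n$ members, and $u_2$ has received signatures on $h_2$ from at least $M_1$ members. These two sets meet in some $v$. Once honest gossip merges the views of $u_1$ and $u_2$, an honest node holds $\PoC_v$, and only then does the Lemma apply.

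This needs two hypotheses you should state explicitly:
\begin{itemize}
\item Honest swarm members are not partitioned from one another. The paper simply asserts that the relevant messages ``transit through swarm gossip, which by assumption includes honest nodes''.
\item Each quorum contains an honest member. This follows from $\neg$(I) only if a quorum must be strictly larger than $\supermaj n$. At $|A| = M_1 = \supermaj n$, the set $A$ can form both quorums alone and no honest node need ever see a signature. So either (I) should read $|A|\geq M_1$, or one must assume, as the paper effectively does, that every attestation counted toward finality is gossiped through honest nodes.
\end{itemize}

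\textbf{The probability bound.} Your two marginal bounds are identifications rather than derivations. Theorem~\ref{thm:swarm} bounds non-detection of a double spend, not the event that a VRF-drawn swarm is more than a $\supermaj$ fraction adversarial; that event needs its own tail bound. Likewise, an eclipse is an adversarial network capability, not the random-gossip failure event of the Lemma. The paper's proof offers no derivation of these bounds either, so present them as assumptions.
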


\begin{IEEEproof}
\emph{Part (a): Necessity.} We prove the contrapositive: $\lnot$(I)$\wedge\lnot$(II) implies no conflicting finality. Assume the adversary controls $|A| \leq \supermaj n$ of $\swarm{i}$, so honest nodes form a set $H$ with $|H| \geq n/3$. Suppose for contradiction both clients hold valid finality for conflicting hashes. There exist $F_1, F_2 \subseteq \swarm{i}$ with $|F_1|, |F_2| \geq \supermaj n$, each satisfying Definition~\ref{def:finality}. By inclusion-exclusion:
\begin{equation}
  |F_1 \cap F_2| \geq |F_1| + |F_2| - n \geq \tfrac{4n}{3} - n = \tfrac{n}{3} > 0.
\end{equation}
Let $v \in F_1 \cap F_2$. Node $v$ has signed both $\mathit{attest}(v, i, h_1, \epoch)$ and $\mathit{attest}(v, i, h_2, \epoch)$ with $h_1 \neq h_2$, so $\PoC_v$ exists. For finality condition (ii) to hold for any $v \in F_1$, node $v$ must have received gossip attestations from the full set $F_1$, and similarly for $F_2$. These messages transit through swarm gossip, which by assumption includes honest nodes $u \in H$. Each $u \in H$ that receives both attestations for the same $v$ constructs $\PoC_v$ immediately and begins forwarding it.

Under $\lnot$(II), at least one honest node can reach $C_1$ or $C_2$. The honest node $u$ holding $\PoC_v$ delivers it to both clients with probability $\geq 1 - f_J^{\kappa r}$ via Lemma~\ref{lem:poc_propagation}. A correct client receiving $\PoC_v$ rejects finality (condition (iii) fails), contradicting the assumption.

\emph{Part (b): Sufficiency.} We construct an adversary strategy. Let the adversary control $A \subseteq \swarm{i}$ with $|A| = \lceil \supermaj n \rceil + 1$. Define $h_1 \neq h_2$.

\emph{Phase 1.} Each $v \in A$ signs both $\mathit{attest}(v, i, h_1, \epoch)$ and $\mathit{attest}(v, i, h_2, \epoch)$, generating $\PoC_v$ (which the adversary suppresses).

\emph{Phase 2.} The adversary eclipses both clients (condition (II)). To $C_1$, the adversary delivers only $\{\mathit{attest}(v, i, h_1, \epoch) : v \in A\}$; to $C_2$, only $\{\mathit{attest}(v, i, h_2, \epoch) : v \in A\}$.

\emph{Phase 3.} The adversary arranges for each $v \in A$ to receive (from other members of $A$) gossip messages attesting to $h_1$ when interacting with $C_1$'s network view and $h_2$ when interacting with $C_2$'s. Each $v \in A$ thus satisfies condition (ii) for the hash it presents to each client.

\emph{Phase 4.} The adversary instructs all $v \in A$ to suppress any $\PoC$ they receive. Under condition (II), no honest node can deliver a $\PoC$. Thus condition (iii) holds vacuously for all $v \in A$.

Each client $C_\ell$ receives a set of attestations satisfying all three conditions of Definition~\ref{def:finality}. Both clients therefore hold valid finality attestations --- $C_1$ for $h_1$ and $C_2$ for $h_2$ --- simultaneously.
\end{IEEEproof}

\section{Buridan's Principle Does Not Apply}
\label{sec:buridan}

Lamport's Buridan's Principle~\cite{lamport2012buridan}: \emph{a discrete decision based upon an input having a continuous range of values cannot be made within a bounded length of time.} Applied to consensus: if two competing transaction proposals are ``arbitrarily close'' --- differing by an infinitesimally small timing difference --- any arbiter may be driven into indecision of arbitrary length.

\begin{theorem}[Buridan Inapplicability]
\label{thm:buridan}
Chilling-effect consensus (Definition~\ref{def:finality}) is not subject to Buridan's Principle.
\end{theorem}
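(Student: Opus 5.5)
The plan is to show that the hypothesis of Buridan's Principle fails for chilling-effect consensus, rather than to contest the principle itself. Lamport's principle needs two things: a decision function that maps a \emph{continuous} input space to a discrete output, and the requirement that this decision be made within a bounded time on every input. I will argue that the input to the finality predicate of Definition~\ref{def:finality} ranges over a finite, discrete set. I will also argue that the predicate has a default output that needs no arbitration. Both hypotheses then fail, and the principle yields no conclusion.

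\emph{Step 1: discretise the input.} Fix an epoch $\epoch$ and a stream $\Stream{i}$. A Watcher $v \in \swarm{i}$ evaluates finality from three kinds of data. The first is the set of signed attestations $\mathit{attest}(u,i,h,\epoch)$ it has received. The second is the set of $\PoC$s it has received. The third is the membership of $\swarm{i}$, which the VRF fixes (Definition~\ref{def:epoch}). Each attestation is a finite bit string. The swarm has size $n<\infty$, and each member signs only the finitely many hashes it actually observed. So the state of $v$ at any instant lies in a finite set $\Sigma_v$. The finality predicate is a Boolean function $\Phi:\Sigma_v\to\{\text{Green},\neg\text{Green}\}$. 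It is computable in time polynomial in $n$: it counts signatures per hash against the threshold $M_1\ge\supermaj n$ and checks that no $\PoC$ is present. No continuous parameter, such as relative arrival time, enters $\Phi$. Arrival order affects \emph{which} element of $\Sigma_v$ is current, but $\Phi$ is evaluated on that element, never on a timing difference.

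\emph{Step 2: absence of evidence gives a default.} By the remark following Definition~\ref{def:finality}, the decision is not a choice between $h_1$ and $h_2$. It is the question ``has a supermajority attested to $h$ with no conflicting evidence seen?'' If two near-simultaneous conflicting hashes arrive, the outcome is not an arbitration: conflicting signatures from any single node form a $\PoC$ by Definition~\ref{def:poc}, condition (iii) fails, and the stream stays Yellow. When evidence is missing, the default answer is likewise ``not yet Green.'' The protocol therefore never has to break a tie between two nearly equal alternatives. At every instant exactly one of the discrete outcomes Green or Yellow holds, and in the $\PoC$ case Red holds (Definition~\ref{def:colour}). The epoch boundary at $T_{\epoch}$ supplies a bounded deadline, after which a stream that is not yet Green is simply treated as Yellow and reassigned. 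The decision is therefore made in bounded time.

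\emph{Main obstacle.} The delicate point is the physical boundary case. A message arriving exactly at the epoch deadline, or at the moment $\Phi$ is sampled, looks like a continuous input again: did it arrive before or after? I would handle this by treating that sampling as ordinary local, discrete-state computation. Buridan's residual metastability then lives in the hardware of each node, as it does for every digital system, and not in the protocol. At the protocol level the consequence of either resolution is harmless. Either the message is counted now, or it is counted in the next evaluation, or it is carried into the next epoch via the two-epoch retention of Definition~\ref{def:rippleid}. Neither branch yields a conflicting Green. I would state this caveat explicitly so that the theorem is read as a claim about the protocol's decision logic, not about arbiter circuits.
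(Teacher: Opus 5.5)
Your proposal is correct and follows essentially the same route as the paper's proof. The paper likewise argues (a) that the input is a discrete attestation count $k \in \{0,\ldots,n\}$ compared against a crisp integer threshold rather than a continuous parameter, (b) that absence-of-evidence consensus never chooses between $h_1$ and $h_2$ but rests in a default (Yellow) state it leaves only on discrete events, and (c) that the residual Yellow state is bounded by $T_{\epoch}$. Your extra paragraph on a message arriving exactly at the sampling instant or epoch deadline goes beyond the paper, which does not discuss that case, and usefully makes explicit that the claim concerns the protocol's decision logic, with any physical metastability left to the node hardware.
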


\begin{IEEEproof}
Lamport~\cite{lamport2012buridan} formalises the principle as follows. Let $A_t(x)$ be the state of an arbiter at time $t$ given initial condition $x \in [0,1]$ (a continuous parameter). If $A_t$ is a continuous function of $x$ and must converge to one of two discrete states, then for every $T > 0$ there exists an $x$ such that $A_T(x)$ is neither 0 nor 1. Chilling-effect consensus is not subject to this principle for two reasons.

\emph{(a) There is no continuous input parameter.} In Lamport's model, $x$ represents a physical quantity with a continuous range. In chilling-effect consensus, the ``input'' is the count of Watcher attestations for hash $h$: an integer $k \in \{0, 1, \ldots, n\}$. There is no $x \in (0,1)$ parameterising a ``nearly indifferent'' input. The threshold is a crisp integer: either $k \geq \lceil \supermaj n \rceil$ or $k < \lceil \supermaj n \rceil$. The integer gap between $k$ and $k{+}1$ is exactly 1 attestation.

\emph{(b) The decision is not between two competing values.} Chilling-effect consensus does not choose between two competing hash values. It asks: \emph{has a supermajority attested to some single hash with no $\PoC$ observed?} This has a ground state (Yellow: not yet) and transitions to Green when the integer count crosses the threshold. If a $\PoC$ is produced, the stream transitions to Red --- a discrete event triggered by concrete evidence. At no point does the protocol need to choose between $h_1$ and $h_2$.

\emph{Residual Yellow state.} The Yellow state is not indefinite indecision but bounded waiting. The maximum duration is $T_{\epoch}$, after which the epoch shuffle provides a fresh swarm.
\end{IEEEproof}

\begin{remark}[Practical Yellow duration]
In practice, $T_{\epoch}$ is chosen so that a correctly functioning swarm reaches Green well within a single epoch for any stream with legitimate traffic. The Yellow state is overwhelmingly the transient state between message submission and swarm attestation, not a sign of conflict.
\end{remark}

\section{End Users as the Court of Final Resort}
\label{sec:endusers}

A blockchain aggregates all ambiguity into a single chain and resolves it by expending resources proportional to total network stake. Every disagreement --- including those involving transactions of negligible value --- is resolved by the full network. This is the Ministry of Truth model: global arbitration for every dispute.

Theorem~\ref{thm:disagreement} shows that genuine ambiguity (two correct clients disagreeing) requires doubly negligible probability events. When a stream is Yellow, it does not indicate fraud --- it indicates that finality has not yet been reached. The recipient decides:

\begin{proposition}[User-determined finality]
\label{prop:usertrust}
A recipient who requires certainty waits for Green (at most $T_{\epoch}$). A recipient who tolerates small risk accepts Yellow immediately. A recipient who sees Red waits for the next epoch shuffle. No global process adjudicates between these choices.
\end{proposition}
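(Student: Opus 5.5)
The proposition bundles three claims, one per colour (Definition~\ref{def:colour}), plus a structural claim that no global process adjudicates. I would prove each claim from results already established and then argue the structural point separately.

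\emph{Green.} I would first argue that waiting for Green terminates within $T_{\epoch}$. By Theorem~\ref{thm:buridan}, the Yellow state is bounded waiting rather than indefinite indecision. Within an epoch, the colour is a function of an integer attestation count and the presence or absence of a $\PoC$. So either the count crosses $\lceil \supermaj n\rceil$ with conditions (i)--(iii) of Definition~\ref{def:finality} satisfied, or the epoch ends and a fresh swarm is drawn (Definition~\ref{def:epoch}). The phrase ``at most $T_{\epoch}$'' I would read as a bound on the waiting time per epoch; the recipient's state is always resolved into a definite colour at the shuffle.

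I would then justify ``certainty'' in the paper's high-probability sense. Once Green is observed, Theorem~\ref{thm:doublespend} rules out a double spend under the finalised hash. Theorem~\ref{thm:disagreement} bounds the probability that another correct client holds a conflicting finality by $e^{-s} + f_J^{\kappa r}$. By Corollary~\ref{cor:proportional}, this bound can be made as small as the recipient wishes by staking more.

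\emph{Yellow.} For the risk-tolerant recipient, I would quantify the risk of accepting a Yellow state. A stream is Yellow only when no $\PoC$ has reached the recipient. By Lemma~\ref{lem:poc_propagation}, an existing conflict would reach the recipient with probability at least $1 - f_J^{\kappa r}$ after $r$ gossip rounds. The residual risk is therefore bounded by a computable quantity, and the recipient can compare it against the transfer's value. No other party's input enters this decision.

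\emph{Red.} The Red case follows from counting. More than $\supermaj|\swarm{i}|$ implicated members leave fewer than $\tfrac13$ honest attesters, so condition (i) of Definition~\ref{def:finality} is unachievable within the epoch. Waiting for the independently drawn swarm of the next shuffle is therefore the only route back to Green. Theorem~\ref{thm:swarm} bounds the probability that this fresh swarm is also compromised.

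\emph{No global adjudication.} For the final sentence I would inspect the inputs to each decision. Every input is either a local observation by the recipient (the attestations received, the $\PoC$s received, the recipient's own threshold) or the per-epoch seed. The seed is the only global object, and it is fixed independently of any transaction. Consequently, no protocol step requires agreement between different recipients' choices.

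\emph{Main obstacle.} The hard step is making ``certainty'' and ``at most $T_{\epoch}$'' precise. Green is not guaranteed within an epoch if swarm members are offline (Remark~\ref{rem:dynamic_membership}). I would handle this by stating the bound conditional on a live honest supermajority; otherwise the recipient waits for successive epochs, each of which is bounded.
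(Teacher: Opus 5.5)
The paper gives no separate proof of this proposition. It is read off from Theorem~\ref{thm:disagreement} (Green is safe up to $e^{-s}+f_J^{\kappa r}$), the residual-Yellow clause in the proof of Theorem~\ref{thm:buridan} (the source of ``at most $T_{\epoch}$''), the Red clause of Definition~\ref{def:colour}, and the locality stated in Theorem~\ref{thm:minimal}. Your Green, Red and no-adjudication paragraphs follow that same route.

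The step that fails is your quantification of the Yellow risk. Lemma~\ref{lem:poc_propagation} assumes a $\PoC$ that already exists and is held by an honest junior, and it bounds non-delivery only after $r$ gossip rounds. A recipient who accepts Yellow \emph{immediately} has waited zero rounds. More importantly, the main hazard that recipient runs is not an undelivered $\PoC$ at all. Suppose the sender submits a competing transfer that the Executor serialises first. The recipient's transfer is then simply rejected under validity condition~(iv) (the sequential case of Theorem~\ref{thm:doublespend}), and no $\PoC$ is ever produced. An equivocation that happens after acceptance is likewise outside the lemma.

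Yellow also does not certify that no $\PoC$ exists, given the threshold in Definition~\ref{def:colour}. $\PoC$s implicating at most $\supermaj|\swarm{i}|$ members make condition~(iii) of Definition~\ref{def:finality} fail without turning the stream Red. So ``the residual risk is bounded by a computable quantity'' does not follow. The proposition does not need such a bound, since the paper leaves the Yellow risk to the recipient's own judgement. Drop the bound, or restrict it to conflicts that had already surfaced as $\PoC$s at least $r$ rounds before acceptance.

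You are right that ``at most $T_{\epoch}$'' bounds only one epoch's Yellow episode, not the time to Green. However, conditioning on a live honest supermajority does not repair it:
\begin{itemize}
\item By condition~(iii), a single equivocating swarm member (any $\PoC$ involving $\Stream{i}$) blocks Green for the whole epoch.
\item You also need a live correct Executor (the crash case of Theorem~\ref{thm:doublespend}).
\item The attestation gossip must be delivered before the epoch ends.
\item ``Successive epochs, each bounded'' gives no bound on the total wait.
\end{itemize}

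Two smaller points. Staking more shrinks only the $e^{-s}$ term, not $f_J^{\kappa r}$, so the disagreement bound cannot be made arbitrarily small that way. In the Red case, implicated members can still sign attestations. It is therefore condition~(iii), not the count in condition~(i), that rules out Green for a client that has seen the $\PoC$s.
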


This faithfully models economic reality. Merchants already make risk-adjusted acceptance decisions (cash versus cheque versus credit card). Intercloud makes the risk explicit and cryptographically bounded.

\section{The Single Global Agreement}
\label{sec:randao}

\begin{theorem}[Minimal Global Consensus]
\label{thm:minimal}
The only global agreement required by Intercloud is the sequence of random seeds $\{\mathit{seed}_\epoch\}$ for VRF-based swarm assignment. All other consensus is local to each stream's swarm.
\end{theorem}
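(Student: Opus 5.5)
The plan is to prove Theorem~\ref{thm:minimal} by enumeration. I will list every protocol component that requires agreement among nodes, and show that each one except the seed sequence is either local to a single stream's swarm or needs no agreement at all. The components are:
\begin{itemize}
  \item swarm assignment (Definition~\ref{def:epoch});
  \item ripple deduplication (Definition~\ref{def:rippleid});
  \item ordering of transfers within a stream (Theorem~\ref{thm:doublespend});
  \item finality and colour (Definitions~\ref{def:finality} and~\ref{def:colour});
  \item Proof-of-Corruption verification (Definition~\ref{def:poc});
  \item transfer validity, including the backing invariant (Definitions~\ref{def:transfer} and~\ref{def:backing}).
\end{itemize}
For each component I will write down which inputs it reads and whose state those inputs come from. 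Showing that no component reads state from an unrelated stream's swarm, other than $\mathit{seed}_\epoch$, gives the "all other consensus is local" half.

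\emph{Necessity of the seed.} Swarm assignment $\swarm{i} = \mathsf{VRF}(\mathit{seed}_\epoch, i, n_i)$ is a deterministic function of the seed and of $n_i$. Here $n_i$ depends only on $\INTER_i$, which belongs to the integrity layer of $\Stream{i}$ alone. So any two nodes that agree on $\mathit{seed}_\epoch$ compute the same $\swarm{i}$ for every $i$ without further communication. Conversely, if two nodes held different seeds, they would compute different swarms. Then finality certificates in the sense of Definition~\ref{def:finality} could not be checked against a common membership set, and the intersection argument of Theorem~\ref{thm:disagreement}(a) would fail. The seed is therefore genuinely global and cannot be dropped.

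\emph{Sufficiency of local agreement.} I would go through the remaining components in order:
\begin{itemize}
  \item Ripple deduplication uses only each node's private table $\mathit{seen}_v$. Theorem~\ref{thm:termination} already notes that no global coordination is needed.
  \item Transfer ordering is performed by the single Executor of $\Stream{i}$ using Lamport order restricted to that stream.
  \item Finality conditions (i)--(iii) refer only to members of $\swarm{i}$ and to Proofs of Corruption involving $\Stream{i}$.
  \item A Proof of Corruption is self-certifying. Verifying it needs only $k_v$, so that step involves no agreement at all.
  \item Transfer validity checks state in $\Stream{i}$ and the relation $R_{ij}$. The recipient-side update touches only $\Stream{j}$'s integrity layer, which $\swarm{j}$ attests to locally.
  \item Exchange rates $\mathit{exRate}(c)$ are publicly computable functions, so evaluating them requires no agreement.
\end{itemize}

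\emph{Main obstacle.} The hardest step is cross-stream transfers. They touch two swarms, $\swarm{i}$ and $\swarm{j}$, and might seem to need joint consensus. My approach is to argue that each swarm attests only to its own stream's hash. The sender's debit is finalised by $\swarm{i}$, and the recipient's credit is a subsequent append in $\Stream{j}$, referencing the Green hash $h_i$ as evidence and finalised by $\swarm{j}$. This is a pairwise causal dependency, not a global agreement. A second gap is that "agreement on $\{\mathit{seed}_\epoch\}$" presupposes agreement on epoch boundaries. I would handle this by treating epoch indices as part of the seed sequence itself, and cite the $O(N)$-per-epoch randomness beacon described in the abstract as the mechanism.
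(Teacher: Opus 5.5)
Your enumeration is the paper's strategy. Its proof simply lists finality (swarm-internal gossip), transfer validity (the stream's own state), ripple deduplication (the local $\mathit{seen}_v$ table) and capital-flight prevention (pre-existing relations) as local, and justifies the seed as needed ``to prevent adversarial swarm selection''. Your data-dependency criterion and your treatment of cross-stream transfers as a causal dependency rather than a joint update are real improvements on that. However, two steps do not go through as written.

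\textbf{Exchange rates.} This bullet contradicts your own criterion. $\mathit{exRate}(c)$ is a public formula, but its inputs (by default $\INTER_c^{\mathit{reserve}}$ and $\mathit{supply}_c$, Definition~\ref{def:exchange_rate}) are mutable state of the currency stream $\Stream{c}$, secured by $\swarm{c}$. A transfer $\Stream{i}\to\Stream{j}$ in coin $c$ therefore reads a third swarm's state. If the two ends evaluated the rate against different versions of $\Stream{c}$, debit and credit would differ and Theorem~\ref{thm:zerosum} would fail. The backing invariant at $\Stream{i}$ likewise depends on the current rate of every coin $\Stream{i}$ holds. The repair is the one the paper uses in Theorem~\ref{thm:security_tracks_value}(iv) and Remark~\ref{rem:rate_timing}: the transfer message records the rate actually used, read from a finalised state of $\Stream{c}$, and the invariant is asserted only at the logical time of each append. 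That turns the dependence into a causal chain $\Stream{c}\to\Stream{i}\to\Stream{j}$, which your cross-stream argument then covers.

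\textbf{Necessity of the seed.} Your argument shows only that \emph{if} swarms are derived from a seed, every node must hold the same one. It does not show the seed ``cannot be dropped''. A seedless assignment, say a fixed public hash of $(i, n_i)$, gives every node the same membership with no agreement at all, and your intersection argument is then satisfied. What excludes it is security, not consistency. The adversary must not be able to foresee swarm membership or grind identifiers into a target swarm, and swarms must be re-drawn every epoch. These are conditions (a) and (b) in the proof of Theorem~\ref{thm:swarm}, and they are what the paper's one-line justification refers to. You need both halves: unpredictability explains why a fresh random seed is required, and your quorum-intersection point explains why it must be common.

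\textbf{A tacit assumption.} Your step ``same seed, hence same $\swarm{i}$ without further communication'' assumes agreement on two further inputs. One is $\INTER_i$ at the shuffle, which is stream-local and harmless. The other is the pool of eligible Watchers, including $N$, which enters $n_i$ via Definition~\ref{def:swarmsize}. With stake-based eligibility (Definition~\ref{def:staking_req}), junior promotion and slashing, that pool is network-wide state. The paper's proof makes the same tacit assumption by treating the node set $\mathcal{N}$ as fixed. You should state it as an explicit hypothesis rather than let it pass silently.
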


\begin{IEEEproof}
Finality (Definition~\ref{def:finality}) requires only swarm-internal gossip. Transfer validity requires only the stream's current state. Ripple deduplication (Theorem~\ref{thm:termination}) requires only the local $\mathit{seen}_v$ table. Capital-flight prevention requires only pre-existing relations. None of these requires global agreement. The VRF seed $\mathit{seed}_\epoch$ requires global agreement to prevent adversarial swarm selection. A sparse RANDAO achieves this in $O(N)$ messages \emph{per epoch}; amortised over all transactions in an epoch, the per-transaction cost approaches zero as volume grows.
\end{IEEEproof}

\begin{remark}[Source of the random oracle]
The RANDAO may be drawn from Intercloud's own nodes or from an external source (Ethereum beacon chain, drand, trusted hardware). The security of swarm assignment requires only unpredictability before epoch start.
\end{remark}

\section{Network Privacy and DDOS Resistance}
\label{sec:network}

\begin{definition}[Attestation-gated requests]
All cross-node requests must include a recent OCP-signed node attestation proving the sender is a genuine, AMI-verified instance. Requests without a valid attestation are silently dropped. Attestations are epoch-bounded and cannot be replayed.
\end{definition}

\begin{proposition}[DDOS resistance]
Attestation-gating prevents anonymous DDOS. Any attacker must operate a genuine instance, incurring verifiable infrastructure costs and creating an on-chain identity eligible for slashing. Anonymous volume attacks are eliminated; economically motivated attacks face stake penalties.
\end{proposition}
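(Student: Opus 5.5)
The plan is to split the claim into its two assertions and prove each from the attestation-gating definition, treating attestations as an admission filter placed in front of every cross-node request handler. \textbf{Step 1 (anonymous DDOS is impossible).} I would first pin down what ``anonymous'' means: a request whose sender cannot be bound to an OCP-signed, AMI-verified instance attestation for the current epoch. By the definition of attestation-gated requests, every honest node checks the attestation before doing any further work, and silently drops the request if the check fails. So anonymous requests consume only the constant-time verification cost at the receiver and never trigger processing, gossip forwarding, or any other amplification. The amplification claim also has to cover ripples. For this I would invoke the ripple-ID mechanism (Definition~\ref{def:rippleid}) and Theorem~\ref{thm:termination}: each node processes a given ripple at most once per epoch, so even an admitted request fans out to at most $N$ hops and cannot be looped.

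\textbf{Step 2 (replay does not reopen the attack).} An adversary could try to reuse a captured valid attestation, which would make it anonymous in effect. I would rule this out with the definition's epoch-bounding clause. The attestation is bound to $\epoch$ and to the sender identity it certifies, so a replay either falls outside its epoch and is dropped, or is attributed to the original instance. If the binding is only to the epoch, I would additionally assume that the attestation signs the channel key, so that a replayed attestation cannot authenticate a different sender. I would state this as an explicit assumption on the attestation format.

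\textbf{Step 3 (attacks become costly and attributable).} For requests that do pass the gate, the sender has a verified instance identity. This yields two cost terms. The first is infrastructure cost, paid per instance. The second is stake at risk. I would argue that an attested identity is a persistent, signing identity in the same sense as a Watcher key in Definition~\ref{def:attestation}. Misbehaviour provable from signed messages can therefore be packaged as self-certifying evidence, analogous to $\PoC$ (Definition~\ref{def:poc}), and slashed. The attack rate an adversary can sustain is then bounded by the number of instances it operates times the per-instance rate limit. Its expected loss scales with the stake those instances bond, which gives the ``economically motivated attacks face stake penalties'' conclusion.

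\textbf{Main obstacle.} The difficult step is Step 3. Nothing stated so far in the paper defines a per-instance rate limit or specifies which request-flooding behaviour constitutes slashable, signed evidence. Flooding is not equivocation, so the $\PoC$ machinery does not directly apply. My first attempt would be to introduce, as a modelling assumption, a per-attested-identity request quota per epoch, signed by the requester. Exceeding the quota then yields two signed requests carrying the same quota counter, which is a self-certifying proof in the style of Definition~\ref{def:poc}. With that assumption the proposition follows, and I would state clearly that the claim holds relative to it.
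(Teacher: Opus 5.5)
The paper gives no proof of this proposition. It is stated immediately after the attestation-gating definition and is meant to be read off it: requests lacking a valid, epoch-bounded, non-replayable attestation are dropped, so anonymous traffic does no protocol work, and every admitted request is bound to a genuine instance. Your Steps~1 and~2 make exactly that reading explicit. The channel-binding assumption in Step~2 is just a concrete realisation of the definition's ``cannot be replayed'' clause, not a new hypothesis.

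Step~3 goes beyond the paper, and your diagnosis is correct. The only self-certifying evidence the paper defines is $\PoC$ (Definition~\ref{def:poc}), which concerns conflicting stream-hash attestations. So ``eligible for slashing'' for flooding is asserted rather than derived, and some extra modelling assumption, such as your signed per-identity quota, is genuinely needed.

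That assumption is not yet sufficient, however. Evidence of misbehaviour yields a penalty only if there is something to slash. The only staking requirement in the paper (Definition~\ref{def:staking_req}) applies to Watchers of a particular stream, so an instance that merely sends cross-node requests may hold no staked balance at all. Your sentence ``its expected loss scales with the stake those instances bond'' therefore presupposes a second assumption, which you should state explicitly: obtaining an OCP attestation requires bonding a minimum stake to that identity. Without it, the ``stake penalties'' conclusion can be vacuous.

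Separately, Step~1 shows that anonymous requests trigger no processing, forwarding or amplification. It does not show that an anonymous flood is harmless, since each junk request still costs bandwidth and a signature check at the gate. So ``impossible'' should be weakened to ``confined to the gate check''; the proposition's own wording overclaims in the same way.

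The appeal to Theorem~\ref{thm:termination} is unnecessary for anonymous traffic, which is never processed. It is more useful in Step~3: if admitted requests are ripple hops, the positive per-hop fee $\delta_{\mathit{fee}}$ assumed there already gives a per-message cost within the paper's own model, without any new assumption.
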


Operators may configure nodes to strip IP addresses after the first forwarding hop, passing only signed OCP envelopes. This provides origin anonymity at the cost of routing efficiency, and is an optional per-deployment configuration.

\section{Zero-Knowledge Extension}
\label{sec:zk}

The base model requires Executors to hold plaintext Rules and balances. A theoretical extension replaces these with ZK commitments established at rail-creation time.

\begin{definition}[ZK-committed stream]
Each valid state transition produces a proof
\begin{multline*}
\pi = \mathsf{ZKP}\{(\sigma, \sigma', m) : C(\sigma, m) = \sigma' \\
  \wedge\, \mathit{Com}(\sigma) = h \wedge \mathit{Com}(\sigma') = h'\}
\end{multline*}
where $C$ is the Rules circuit compiled when the stream's economic relations were established. Watchers verify $\pi$ using the public circuit commitment; no plaintext is revealed to any node.
\end{definition}

zk-SNARKs~\cite{groth2016snark} provide compact proofs ($\approx 200$~bytes, Groth16) but require trusted per-circuit setup at rail creation; appropriate for high-volume streams. zk-STARKs~\cite{ben2018stark} have no trusted setup and are post-quantum secure, but with larger proofs ($\approx 40$~KB); appropriate for regulated streams. Universal ZK deployment makes the network opaque to all external parties, including regulators. We recommend selective deployment: personal and communication streams use ZK commitments; regulated financial streams use selective disclosure proofs that can reveal specific fields to authorised recipients on lawful demand.

\section{Junior Nodes, Corruption Detection,\\and PoC Rewards}
\label{sec:junior_rewards}

\subsection{Junior Nodes as the Corruption Detection Layer}

Active swarm members earn Intercoin for securing transactions. They have a conflict of interest: a corrupt swarm member might prefer to suppress a $\PoC$ against a fellow member rather than lose the stream's fees by triggering a Red transition. The design therefore delegates corruption \emph{detection} to agents who have the opposite incentive: junior observer nodes, who are not in the current swarm and actively want corrupt members expelled so they can take their place.

\begin{definition}[Corruption detection by junior nodes]
\label{def:corruption_detection}
A junior observer node watching stream $\Stream{i}$ monitors the signed attestations of all active swarm members $v \in \swarm{i}$. If it observes two signed claims from the same $v$ that constitute a $\PoC_v$, it broadcasts $\PoC_v$ to all peers. If valid $\PoC$s accumulate against more than $\supermaj |\swarm{i}|$ distinct active members, the stream transitions to Red and those members' Intercoin stakes are eligible for slashing in the next epoch.
\end{definition}

The stream goes Red not because a single node misbehaved, but because the swarm as a whole can no longer form an honest supermajority. Junior nodes holding the List of Liars provide read-only availability of the last known-good consensus state during the Red period.

\subsection{PoC Reward Mechanism: Lottery over Forwarders}

A naive reward rule --- ``pay the node that first submits a valid $\PoC$'' --- is vulnerable to front-running: a dishonest node with fast network access could observe an incoming $\PoC$, strip the discoverer's identity, re-sign it as its own, and race to submit first.

\begin{definition}[PoC lottery reward]
\label{def:poc_lottery}
When a $\PoC$ is accepted, every node that \emph{forwarded} the $\PoC$ during the current epoch receives one \emph{lottery ticket}. The winning ticket is drawn at the start of the next epoch using the new VRF seed:
\begin{equation}
  \mathit{winner} = \mathsf{VRF}(\mathit{seed}_{\epoch+1},\, \mathit{PoC\_id},\, \mathit{ticket\_list}).
\end{equation}
The slashed stake of the corrupt nodes is distributed to the winner.
\end{definition}

This mechanism has three properties. \emph{Front-running resistance}: replacing the discoverer's identity does not increase one's lottery odds. \emph{Propagation incentive}: suppressing a $\PoC$ yields zero tickets and zero reward; forwarding yields a chance at the entire slashed stake. \emph{Unpredictability}: the winner is determined by the future VRF seed, unknown at forwarding time.

\begin{proposition}[PoC forwarding dominates suppression]
\label{prop:forward_dominates}
For any junior observer node with positive belief that a $\PoC$ is valid, the expected reward from forwarding strictly exceeds the expected reward from suppressing, regardless of network position or timing.
\end{proposition}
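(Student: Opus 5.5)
We compare the junior node's two pure actions, forward versus suppress, by their expected payoffs under the lottery rule of Definition~\ref{def:poc_lottery}. Let $p > 0$ be the node's belief that the $\PoC$ is valid, and let $K$ be the total number of distinct forwarders of this $\PoC$ during the epoch. We take $K$ to be finite, since $K \leq N$. Let $S > 0$ be the slashed stake, which the paper designates as the prize. Forwarding costs a single message, with fee $\delta_{\mathit{fee}}$ if that fee applies.

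\emph{Step 1: suppression pays zero.} Tickets are issued only to nodes that forwarded during the current epoch. A suppressing node therefore holds no ticket. It also gets no discoverer bonus, since the rule pays no first-submitter reward. Its expected reward from this $\PoC$ is exactly $0$. We should note, and argue away, one possible indirect benefit of suppression: shielding corrupt members. A junior node is not in the swarm by Definition~\ref{def:junior}, so it earns none of the stream's fees. Its promotion prospects, if anything, improve when corrupt members are expelled. We conclude that the suppression payoff is $0$.

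\emph{Step 2: forwarding pays strictly positive.} Since $\mathit{seed}_{\epoch+1}$ is unpredictable at forwarding time (Definition~\ref{def:epoch}), we model the VRF draw as uniform over the ticket list. If the $\PoC$ is accepted, the node wins with probability $1/K \geq 1/N$. Its expected reward is at least $p\cdot\Pr[\text{accepted}\mid\text{valid}]\cdot S/N$. A valid $\PoC$ is self-certifying (Definition~\ref{def:poc}), so acceptance given validity needs only that the $\PoC$ reach the verifier. Forwarding can only help this happen, and Lemma~\ref{lem:poc_propagation} bounds the failure probability by $f_J^{\kappa r} < 1$. The expected reward is therefore $\geq p(1-f_J^{\kappa r})S/N > 0$. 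If the $\PoC$ is invalid, no reward is paid and nothing is slashed, so the node loses at most the forwarding fee.

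\emph{Step 3: independence from position and timing.} Each forwarder receives exactly one ticket, regardless of when or where it forwarded. Identity stripping and re-signing do not add tickets, because tickets are keyed to forwarding events and one node gets at most one. Hence the lower bound in Step~2 involves no positional quantity, only $p$, $S$, $N$ and the gossip parameters.

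\emph{Main obstacle.} The delicate point is strictness when there is a forwarding cost, or when the node's forwarding does not affect acceptance because others will forward anyway. I would handle this in two ways. First, I would assume the per-hop fee is small relative to the prize, that is, $\delta_{\mathit{fee}} < p(1-f_J^{\kappa r})S/N$. This is natural, since $S$ is an entire stake. Alternatively, I would read the proposition as comparing gross rewards, not net utility. Second, I would stress that the ticket is earned by the act of forwarding alone, whether or not the node was pivotal for acceptance. So the $1/K$ term stays positive even when other nodes also forward. I would state the fee condition explicitly as the hypothesis under which "strictly exceeds" holds.
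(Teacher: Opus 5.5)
Your proposal is correct and follows the same argument as the paper: suppression earns no lottery ticket and so pays $0$, while forwarding earns an expected share $S/K$ of a positive slashed stake among finitely many forwarders, and one ticket per forwarder makes timing and position irrelevant. Your extra care (weighting by the belief $p$ and by an acceptance probability from Lemma~\ref{lem:poc_propagation}, and stating the fee condition explicitly where the paper simply asserts the forwarding cost is negligible) refines the paper's route rather than changing it.
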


\begin{IEEEproof}
Suppression yields expected reward 0. Forwarding yields expected reward $V_{\mathit{slash}}/T$, where $T$ is the total number of forwarders in the epoch and $V_{\mathit{slash}}$ is the slashed stake. $V_{\mathit{slash}} > 0$ and $T < \infty$, so the expected reward is strictly positive. The cost of forwarding is a single signed broadcast message --- negligible. Forwarding strictly dominates suppression in expected net value. The timing of forwarding does not affect $T$, so there is no advantage to racing.
\end{IEEEproof}

\section{Intercoin Stake, Vesting, and\\Rational Security}
\label{sec:stake}

\subsection{Vesting as a Security Primitive}

Watcher and junior nodes earn Intercoin for securing streams. If earned Intercoin could be withdrawn immediately, a rational node might accept a bribe to corrupt a single high-value stream and immediately exit. Intercloud prevents this with a vesting constraint enforced by the network's own rate-limiting mechanism.

\begin{definition}[Intercoin vesting]
\label{def:vesting}
Earned Intercoin is credited to a node's \emph{vesting stream}. Withdrawal is subject to a rate-limit function $r_{\mathit{vest}}(\Delta t)$, e.g., $r_{\mathit{vest}}(1\,\mathrm{day}) = 0.10 \cdot \INTER_{\mathit{vested}}$. The \emph{staked balance} is total vested-but-not-withdrawn Intercoin, which determines how many streams it is eligible to watch and how large a slash it would suffer upon corruption.
\end{definition}

\begin{definition}[Per-node staking requirement]
\label{def:staking_req}
To be eligible to watch stream $\Stream{i}$, a node must maintain a staked balance of at least
\begin{equation}
  \INTER_{\mathit{stake}}^{\min}(i) = \INTER_i / n_i,
\end{equation}
i.e., each Watcher's stake must cover its pro-rata share of the stream's Intercoin weight. This requirement is enforced by the network policy layer and verified at swarm assignment time.
\end{definition}

\begin{theorem}[Rational incorruptibility]
\label{thm:rational_incorrupt}
Under Definition~\ref{def:staking_req}, no rational node --- acting alone or as part of a coalition --- will attempt to corrupt stream $\Stream{i}$: for every coalition $\mathcal{C} \subseteq \swarm{i}$, the expected gain is strictly less than the expected cost.
\end{theorem}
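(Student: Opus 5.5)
The plan is to compare, for an arbitrary coalition $\mathcal{C} \subseteq \swarm{i}$, the maximum value the coalition can extract by corrupting $\Stream{i}$ with the stake it forfeits once it is caught. I will split on coalition size. If $|\mathcal{C}| \leq \supermaj n_i$, condition (I) of Theorem~\ref{thm:disagreement} fails, so by the necessity part of that theorem no conflicting finality can be produced, except with the negligible probability $f_J^{\kappa r}$ that the $\PoC$ fails to propagate. Any deviation still produces self-certifying $\PoC$s (Definition~\ref{def:poc}). The gain for such a coalition is therefore at most $f_J^{\kappa r}\cdot \INTER_i$, while each member that signed a conflicting attestation loses its whole stake.

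If $|\mathcal{C}| > \supermaj n_i$, a double-spend is possible in principle. I will bound the gain from above by the total value backed by the stream. By the backing invariant (Definition~\ref{def:backing}, Lemma~\ref{lem:backing}), the value that can be double-spent is at most $\INTER_i$. By the rate limits (Theorem~\ref{thm:capitalflight}), the value that can be moved out before detection is also capped. The cost is the slashed stake. By Definition~\ref{def:staking_req}, each member holds at least $\INTER_i/n_i$, so the coalition forfeits at least
\[
|\mathcal{C}|\cdot \INTER_i/n_i > \supermaj \INTER_i .
\]
Detection is essentially certain. Every attacking member necessarily signs two conflicting attestations, so $\PoC$s exist. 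By Proposition~\ref{prop:forward_dominates}, rational junior nodes forward them, and by Lemma~\ref{lem:poc_propagation} they reach honest parties except with probability $f_J^{\kappa r}$. Vesting (Definition~\ref{def:vesting}) ensures the stake is still locked when slashing happens in the next epoch. The rate limit $r_{\mathit{vest}}$ means only a small fraction can be withdrawn within one epoch, so the forfeited amount stays close to $|\mathcal{C}|\INTER_i/n_i$.

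The main obstacle is making the inequality strict. On these bounds alone, forfeited stake $\tfrac{2}{3}\INTER_i$ against extractable value up to $\INTER_i$ does not yet give cost $>$ gain. Even a full swarm only forfeits about $\INTER_i$, which ties with the gain. My first attempt will be to sharpen the gain bound. The adversary extracts only the \emph{doubly} spent portion, and the eclipse requirement (II) means only the value delivered to the isolated victims, within one epoch and within $\sum_j r_{ij}(T_{\epoch})$, counts. Together this caps the expected gain at
\[
\Pr[(\mathrm{I})\wedge(\mathrm{II})]\cdot \min\Bigl(\INTER_i,\ \sum_j r_{ij}(T_{\epoch})\Bigr),
\]
and Theorem~\ref{thm:disagreement} makes the probability factor at most $e^{-s}+f_J^{\kappa r}$. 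Against this, the expected cost is at least $(1-f_J^{\kappa r})\,|\mathcal{C}|\,\INTER_i/n_i$. This gives strict inequality whenever $(e^{-s}+f_J^{\kappa r}) < \tfrac{2}{3}(1-f_J^{\kappa r})$, which holds for the standard parameters.

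If that route is not clean, the fallback is to state the needed margin explicitly as a hypothesis: either rate limits with $\sum_j r_{ij}(T_{\epoch}) < \supermaj \INTER_i$, or a staking multiplier slightly above $1$. Under that hypothesis the per-coalition comparison goes through uniformly in $\mathcal{C}$.
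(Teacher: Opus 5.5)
There is a genuine gap, in the supermajority case $|\mathcal{C}|>\supermaj n_i$. You correctly see that Definition~\ref{def:staking_req} puts only about $\supermaj\INTER_i$ at risk against up to $\INTER_i$ of extractable value, but the discounting you use to close this does not hold up, for three reasons.

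First, your gain bound treats (II) as necessary for a supermajority coalition to profit. Theorem~\ref{thm:disagreement} only shows that (I)$\vee$(II) is necessary, and once $|\mathcal{C}|>\supermaj n_i$ that disjunction is already satisfied; sufficiency of (I)$\wedge$(II) does not make (II) necessary. A victim that acts on Green before a $\PoC$ reaches it needs no eclipse at all. This is exactly the branch the paper's payoff table records as $g-\ell$: gain realised, then slashed.

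Second, $e^{-s}+f_J^{\kappa r}$ is an unconditional probability over the VRF draw, for an adversary holding a fixed fraction of the network. For a fixed coalition of this size, (I) holds with probability one, and an unconditional bound on $\Pr[(\mathrm{I})\vee(\mathrm{II})]$ says nothing about the success probability conditional on a supermajority. Lemma~\ref{lem:poc_propagation} cannot supply that bound either, because its hypothesis (an honest junior already holds the $\PoC$) is largely under a supermajority coalition's control (compare Phases~1 and~4 of part~(b)).

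Third, even granting a small success probability, you discount the gain by it while charging the slashing cost at $1-f_J^{\kappa r}$ unconditionally. That presumes the coalition commits blindly. A rational coalition deviates only when it expects to succeed, which cancels the factor and returns you to the $\supermaj\INTER_i$-versus-$\INTER_i$ comparison.

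The fallback proves a different theorem, since the statement's only hypothesis is Definition~\ref{def:staking_req}. The multiplier version is also miscalibrated. For $|\mathcal{C}|=\lceil\supermaj n_i\rceil+1$ the forfeit is only about $\supermaj\lambda\INTER_i$, so you need roughly $\lambda(1-f_J^{\kappa r})>\tfrac32$, not a multiplier slightly above $1$. Only the whole-swarm coalition is rescued by $\lambda\approx 1$.

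The paper sidesteps your obstacle with a different step. It takes a coalition's extractable value to be at most its pro-rata share $|\mathcal{C}|\INTER_i/n_i$, so gain and stake at risk scale together member by member. The comparison then collapses to the single-node payoff $g=\INTER_i/n_i$ against $\ell\geq g$. Strictness comes from the forgone honest income $\epsilon>0$ together with the negligible evasion probability $f_J^{\kappa r}$. That pro-rata cap and the $\epsilon$ term, which your accounting omits, are what your argument is missing.

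Likewise, in the sub-supermajority case the paper simply notes that Green needs a supermajority, so the gain is $0$. Your $f_J^{\kappa r}\INTER_i$ slack would instead require the unchecked condition $n_i f_J^{\kappa r}<1-f_J^{\kappa r}$ when only one member's stake is at risk.

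Your point that the staking floor alone does not cover a supermajority's full extractable value is a fair criticism of the pro-rata cap, which the paper asserts rather than derives. But as written, your proposal does not establish the stated strict inequality.
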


\begin{IEEEproof}
\emph{Single-node case.} A single corrupt node $v$ generates a $\PoC$ by producing conflicting attestations. By Lemma~\ref{lem:poc_propagation} and Proposition~\ref{prop:forward_dominates}, detection is near-certain. Detection triggers slashing of $v$'s entire stake $\INTER_{\mathit{stake}}(v) \geq \INTER_i / n_i$. A single node cannot complete a double-spend: achieving Green finality requires a supermajority to attest. Expected gain $= 0 < \epsilon =$ expected honest income.

\emph{Coalition case.} Consider a coalition $\mathcal{C}$ of $|\mathcal{C}|$ nodes. Their combined extractable value is at most $\INTER_i \cdot |\mathcal{C}|/n_i$. Their combined slashing cost, upon detection, is $\sum_{v \in \mathcal{C}} \INTER_{\mathit{stake}}(v) \geq |\mathcal{C}| \cdot \INTER_i / n_i$. So even for a coalition, expected gain $\leq$ expected loss.

\emph{Game-theoretic stability.} Let $g = \INTER_i/n_i$ and $\ell = \INTER_{\mathit{stake}}(v) \geq g$. The payoff matrix is:
\begin{equation*}
\mathrm{Payoff}(v) = \begin{cases}
    +\epsilon & v \text{ honest,} \\
    g - \ell \leq 0 & v \text{ corrupts \& detected,} \\
    g & v \text{ corrupts \& evades.}
  \end{cases}
\end{equation*}
Expected payoff from corruption:
\begin{align*}
\mathbb{E}[\mathrm{Corrupt}] &\leq f_J^{\kappa r} g + (1 - f_J^{\kappa r})(g - \ell)\\
 &= g - (1 - f_J^{\kappa r})\ell \leq g - \ell + f_J^{\kappa r}\ell.
\end{align*}
Since $\ell \geq g$ and $f_J^{\kappa r}$ is negligible, $\mathbb{E}[\mathrm{Corrupt}] \approx 0$. Honest operation yields $+\epsilon > 0$. Honest strictly dominates Corrupt for every node regardless of what others do. Honest for all nodes is the unique Nash equilibrium.
\end{IEEEproof}

\begin{remark}[Gradual vesting as a circuit breaker]
The vesting rate also functions as a capital-flight circuit breaker at the security layer: even if many nodes simultaneously tried to withdraw, the rate limit caps the reduction in total staked security per unit time, giving the network time to detect anomalous behaviour and respond.
\end{remark}

\section{Local Coins, Emission, Retirement,\\and Exchange Rates}
\label{sec:local_coins}

\subsection{Streams as Smart-Contract Analogues}

Streams in Intercloud can hold and transfer balances of \emph{local coins} --- community-issued currencies or application tokens --- in addition to the global Intercoin reserve. This makes a stream the embarrassingly parallel analogue of a smart contract: each stream is an independent, append-only state machine that can hold assets and execute rules, but streams execute in parallel with no global ordering requirement. Unlike Ethereum smart contracts, all state transitions are private by default: Watchers see only hashes. Regulators and auditors can follow aggregate coin flows between streams without observing the content of any individual stream.

\subsection{Currency Streams: Emission and Retirement}

\begin{definition}[Currency stream]
\label{def:currency_stream}
A stream of type \texttt{Intercloud/currency} is a \emph{currency stream} $\Stream{c}$ that maintains: $\mathit{supply}_c$ (cumulative emissions minus cumulative retirements), $\INTER_c^{\mathit{reserve}}$ (the Intercoin reserve backing coin $c$), and an emission and retirement policy encoded in $\mathit{Rules}_c$.
\end{definition}

\begin{definition}[Emission and retirement]
\label{def:emit_retire}
An \emph{emission event} appends a message to $\Stream{c}$ minting $\delta > 0$ new coins, increasing $\mathit{supply}_c$ by $\delta$ and depositing them into a specified recipient stream. A \emph{retirement event} burns $\delta$ coins, decreasing $\mathit{supply}_c$ by $\delta$. Both are valid only if permitted by $\mathit{Rules}_c$. The Intercoin reserve may be increased by depositing Intercoin, or decreased (up to the rate limit) by withdrawing.
\end{definition}

\subsection{Exchange Rate Formula}

\begin{definition}[Default exchange rate]
\label{def:exchange_rate}
The \emph{default exchange rate} of coin type $c$ to Intercoin at any moment is:
\begin{equation}
  \mathit{exRate}(c) =
  \frac{\INTER_c^{\mathit{reserve}}}{\mathit{supply}_c}
  = \frac{\INTER_c^{\mathit{reserve}}}{\mathit{emitted}_c - \mathit{retired}_c},
\end{equation}
provided $\mathit{supply}_c > 0$. If $\mathit{supply}_c = 0$, the rate is undefined. $\mathit{exRate}(c)$ is the Intercoin value per unit of $c$, equal to the reserve per circulating unit. It rises when Intercoin is deposited or coins are retired; it falls when coins are emitted without a proportional reserve increase.
\end{definition}

\begin{remark}[Pluggable exchange rates]
The default formula is not mandatory. A currency stream may specify an alternative function in its Rules program --- a fixed peg, an algorithmic curve, or an oracle-fed rate. Whatever formula is used, $\mathit{exRate}(c)$ must be publicly computable from the integrity layer.
\end{remark}

\begin{remark}[Reserve manipulation and rate limits]
Because the currency stream operator controls the Intercoin reserve, they could in principle inflate $\mathit{exRate}(c)$ by depositing Intercoin just before a transfer and withdrawing afterward. This is mitigated by the rate-limit mechanism of Definition~\ref{def:relation}: reserve deposits and withdrawals are subject to pre-subscribed economic relations. A sudden large reserve change is therefore bounded in magnitude and takes time proportional to the rate limit.
\end{remark}

\begin{remark}[Exchange rate update timing]
\label{rem:rate_timing}
$\mathit{exRate}(c)$ changes only when emission or retirement events occur on $\Stream{c}$, or when the reserve is adjusted via a rate-limited relation. Between such events the rate is constant. The security weight of a coin transfer, $\Delta = a \cdot \mathit{exRate}(c)$, is therefore deterministic and auditable from the integrity layer.
\end{remark}

\section{The Dot-Product Security Theorem}
\label{sec:dotproduct}

\subsection{Security Follows Value}

The core economic property of Intercloud is that the security allocated to any stream is automatically proportional to the economic value it holds.

\begin{definition}[Stream economic value]
\label{def:stream_value}
The \emph{economic value} of stream $\Stream{i}$, measured in Intercoin, is the dot product of its local coin balances with their exchange rates:
\begin{equation}
  V_i = \sum_{c} \Stream{i}.\mathit{balance}[c] \cdot \mathit{exRate}(c).
\end{equation}
By the backing invariant (Definition~\ref{def:backing}), $\INTER_i \geq V_i$ at all times.
\end{definition}

\begin{theorem}[Security tracks value]
\label{thm:security_tracks_value}
Let $\Stream{i}$ hold economic value $V_i$. Then:
\begin{enumerate}[label=(\roman*)]
  \item $\INTER_i \geq V_i$.
  \item $n_i \geq \lceil c\sqrt{V_i} \rceil$.
  \item Double-spending probability is at most $e^{-s}$ where $s \propto \sqrt{V_i}$.
  \item When value $a \cdot \mathit{exRate}(c)$ moves from $\Stream{i}$ to $\Stream{j}$, exactly that amount of Intercoin weight moves with it: $\INTER_j$ increases and $\INTER_i$ decreases by $\Delta = a \cdot \mathit{exRate}(c)$, preserving $\INTER \geq V$ at both streams.
\end{enumerate}
\end{theorem}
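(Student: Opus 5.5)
The plan is to derive each of the four clauses directly from earlier results. The theorem needs no new machinery. It assembles the backing invariant, the swarm-size rule, Corollary~\ref{cor:proportional}, and the transfer theorems.

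\emph{Clause (i)} is the backing invariant of Definition~\ref{def:backing} rewritten with Definition~\ref{def:stream_value}. The right-hand side of the invariant is exactly $V_i$. The invariant holds at stream creation, and Lemma~\ref{lem:backing} shows every valid transfer preserves it. An induction over the append-only message sequence $\mathcal{M}_i$ therefore gives $\INTER_i \geq V_i$ at every state.

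\emph{Clause (ii)} follows from (i) by monotonicity. Definition~\ref{def:swarmsize} sets $n_i = \min(N, \lceil c\sqrt{\INTER_i}\rceil)$. Since $x \mapsto \lceil c\sqrt{x}\rceil$ is nondecreasing on $\mathbb{R}_{\geq 0}$, clause (i) gives $\lceil c\sqrt{\INTER_i}\rceil \geq \lceil c\sqrt{V_i}\rceil$. The $\min$ with $N$ is the one wrinkle. I will state (ii) for the unsaturated regime $\lceil c\sqrt{\INTER_i}\rceil \le N$ and note that in the saturated case the whole network watches the stream, so $n_i = N$ is maximal anyway.

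\emph{Clause (iii)} applies Corollary~\ref{cor:proportional}. That corollary gives failure probability $e^{-s}$ with $s$ linear in $n_i$ to leading order, obtained by inverting Theorem~\ref{thm:swarm} for fixed $r,\beta$. Combined with (ii), this gives $s \gtrsim (r/\beta)\, c\sqrt{V_i}$. Since $e^{-s}$ is decreasing in $s$, the bound is at most $e^{-s'}$ with $s' \propto \sqrt{V_i}$.

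This is the step I expect to be the main obstacle. The proportionality is only asymptotic: the inversion drops $-1-\log(r+2)+O(\log s)$ terms. I will state it as $s = \Theta(\sqrt{V_i})$ for $V_i$ large, and take care to use a lower bound on $s$ rather than an equality, because $\INTER_i$ may strictly exceed $V_i$.

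\emph{Clause (iv)} combines Theorem~\ref{thm:zerosum} with Lemma~\ref{lem:backing}. The transfer rule debits $\INTER_i$ and credits $\INTER_j$ by exactly $\Delta = a\cdot\mathit{exRate}(c)$. The same transfer changes $V_i$ by $-\Delta$ and $V_j$ by $+\Delta$, because only $\mathit{balance}[c]$ changes and Remark~\ref{rem:rate_timing} keeps $\mathit{exRate}(c)$ fixed during the transfer. So the slack $\INTER - V$ is unchanged at both streams, and $\INTER \geq V$ is preserved. Non-negativity of $\INTER_i'$ comes from Theorem~\ref{thm:zerosum}.
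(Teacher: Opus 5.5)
Your proposal is sound within the paper's framework and follows its proof clause by clause: the backing invariant for (i), monotonicity of Definition~\ref{def:swarmsize} for (ii), Corollary~\ref{cor:proportional} for (iii), and Theorem~\ref{thm:zerosum} with Lemma~\ref{lem:backing} and the append-time rate of Remark~\ref{rem:rate_timing} for (iv); your treatment of the $\min(N,\cdot)$ cap and of the only leading-order proportionality in (iii) is more careful than the paper's, which drops both (though since $\INTER_i$ can exceed $V_i$ arbitrarily you get $s=\Omega(\sqrt{V_i})$, not $\Theta$, which is all your $e^{-s'}$ step needs). One caution, shared with the paper: your induction over $\mathcal{M}_i$ in (i) only sees events in $\Stream{i}$'s own log, but $V_i$ also rises when $\mathit{exRate}(c)$ rises through retirements or reserve deposits on $\Stream{c}$, which Lemma~\ref{lem:backing} does not cover, so ``$\INTER_i \geq V_i$ at every state'' really rests on the backing invariant being treated as an enforced requirement, as the paper treats it (its proof of (iv) claims the invariant only at the logical time of each append).
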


\begin{IEEEproof}
(i) Directly from the backing invariant and Definition~\ref{def:stream_value}.
(ii) From Definition~\ref{def:swarmsize}: $n_i = \lceil c\sqrt{\INTER_i} \rceil \geq \lceil c\sqrt{V_i} \rceil$ since $\INTER_i \geq V_i$.
(iii) From Corollary~\ref{cor:proportional}: $s \propto n_i \propto \sqrt{\INTER_i} \geq \sqrt{V_i}$.
(iv) From Theorem~\ref{thm:zerosum}: $\Delta$ is deducted from $\INTER_i$ and added to $\INTER_j$. By Lemma~\ref{lem:backing}, the backing invariant is preserved. The exchange rate is determined by the state of $\Stream{c}$. By Remark~\ref{rem:rate_timing}, it changes only on emission/retirement events or reserve adjustments, which are themselves serialised by $\Stream{c}$'s Executor. The transfer on $\Stream{i}$ is serialised by $\Stream{i}$'s Executor. $\Delta$ is computed and committed atomically with the transfer message appended to $\mathcal{M}_i$ --- the message records both $a$ and the $\mathit{exRate}(c)$ value used. Any subsequent change to $\mathit{exRate}(c)$ does not retroactively alter $\Delta$. The backing invariant holds at each stream at the logical time of the append to that stream.
\end{IEEEproof}

\begin{corollary}[Self-calibrating security]
No operator needs to manually configure how much security a stream receives. As value flows into a stream, Intercoin weight flows with it, the backing invariant ensures $\INTER_i \geq V_i$, and the next epoch shuffle assigns a larger swarm proportional to $\sqrt{\INTER_i}$. Security is a local property of each stream, automatically calibrated to the value it holds.
\end{corollary}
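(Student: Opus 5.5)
The corollary is essentially a chaining of Theorem~\ref{thm:security_tracks_value} with the epoch-shuffle mechanism, so I would prove it by tracing a single inflow of value through the protocol and checking at each stage that no operator input is consulted. Concretely, I would fix a stream $\Stream{i}$ and an epoch $\epoch$, and consider an arbitrary sequence of valid transfers into and out of $\Stream{i}$ during $\epoch$. The claim decomposes into three parts. First, weight moves with value. Second, $\INTER_i \geq V_i$ persists. Third, the swarm for $\epoch{+}1$ has size determined solely by $\INTER_i$.

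For the first two parts, I would induct on the sequence of messages appended to $\mathcal{M}_i$. The base case is stream creation, where Definition~\ref{def:backing} states the invariant is established by the initial deposit. For the inductive step, each valid transfer changes $\INTER_i$ by exactly $\pm\Delta$ with $\Delta = a\cdot\mathit{exRate}(c)$, by Theorem~\ref{thm:zerosum} and part~(iv) of Theorem~\ref{thm:security_tracks_value}. Lemma~\ref{lem:backing} then preserves $\INTER_i \geq V_i$, and part~(i) of Theorem~\ref{thm:security_tracks_value} gives the inequality at every logical time.

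For the third part, I would note that Definition~\ref{def:epoch} computes $\swarm{i}$ as $\mathsf{VRF}(\mathit{seed}_{\epoch+1}, i, n_i)$. By Definition~\ref{def:swarmsize}, $n_i = \min(N, \lceil c\sqrt{\INTER_i}\rceil)$, whose only inputs are the system constant $c$, the public value $\INTER_i$ from the integrity layer, and the shared seed. Theorem~\ref{thm:minimal} confirms that the seed is the only global input. The staking requirement of Definition~\ref{def:staking_req} is likewise a function of $\INTER_i/n_i$ alone. Part~(ii) of Theorem~\ref{thm:security_tracks_value} then gives $n_i \geq \lceil c\sqrt{V_i}\rceil$, and Corollary~\ref{cor:proportional} converts this into a failure probability of $e^{-s}$ with $s \propto \sqrt{V_i}$. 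This is the sense of ``automatically calibrated.''

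The main obstacle is making ``no operator needs to configure'' precise. Two loopholes need care.

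\begin{enumerate}
\item The cap $\min(N,\cdot)$ breaks proportionality once $c\sqrt{\INTER_i}$ exceeds $N$. I would state the proportionality only in the regime $c\sqrt{\INTER_i} \le N$ and treat saturation as the maximal possible security.
\item Pluggable exchange rates mean $\mathit{exRate}(c)$ is chosen by the currency stream's Rules program. I would handle this by observing that whichever formula is used, it is publicly computable and fixed per transfer (Remark~\ref{rem:rate_timing}). The map from value to weight is therefore deterministic and needs no per-stream security configuration. The formula itself is part of the coin's economics, not a security setting.
\end{enumerate}

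I would also remark that calibration lags by at most one epoch: the swarm reflects $\INTER_i$ as of the shuffle. Within the current epoch, the existing swarm's guarantee from Theorem~\ref{thm:swarm} still applies.
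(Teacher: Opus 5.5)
Your proposal is correct and follows the paper's route. The paper gives no separate proof of this corollary: it is read off Theorem~\ref{thm:security_tracks_value}(i),(ii),(iv) together with Definition~\ref{def:swarmsize} and the epoch reshuffle, which is exactly your chain. Your remarks on the $\min(N,\cdot)$ cap and the one-epoch lag are refinements the paper omits.

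One caution applies equally to the paper. An induction over appends to $\mathcal{M}_i$ cannot by itself give $\INTER_i \geq V_i$ at every logical time, because $V_i$ also changes when $\mathit{exRate}(c)$ changes through emission, retirement, or reserve events on $\Stream{c}$. For example, a reserve deposit raises $V_i$ while $\INTER_i$ stays fixed. So, like the paper, you are ultimately taking the backing invariant as a standing requirement rather than deriving it.
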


\subsection{Compartmentalisation}

\begin{proposition}[Security compartmentalisation]
\label{prop:compartment}
A successful attack on stream $\Stream{i}$ requires bribing more than $\supermaj n_i$ swarm members. The minimum cost of such a bribe exceeds $\supermaj V_i$.
\end{proposition}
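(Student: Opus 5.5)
The plan has two parts, matching the two sentences of Proposition~\ref{prop:compartment}. The first part is a corollary of Theorem~\ref{thm:disagreement}. The second combines the per-node staking requirement (Definition~\ref{def:staking_req}) with the backing invariant in the form $\INTER_i \geq V_i$ (Theorem~\ref{thm:security_tracks_value}(i)).

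\emph{Step 1: necessity of bribing $> \supermaj n_i$ members.} A successful attack means that some correct client accepts a Green hash inconsistent with the Executor's canonical serialisation, i.e.\ conflicting finality or a double-spend under a finalised $h_i$. I will invoke part (a) of Theorem~\ref{thm:disagreement} for the case $\lnot$(II). If the adversary controls at most $\supermaj n_i$ members, any two finalising sets $F_1, F_2$ intersect in at least $n_i/3$ nodes. A node in $F_1 \cap F_2$ yields a $\PoC$, which honest gossip carries to the clients (Lemma~\ref{lem:poc_propagation}). Finality condition (iii) then fails. So the adversary needs $|A| > \supermaj n_i$, and since swarm members are honest by default, each of them must be bribed.

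\emph{Step 2: cost lower bound.} A rational member $v$ accepts a bribe $b_v$ only if $b_v$ at least offsets the expected slashing loss. Following the payoff analysis in the proof of Theorem~\ref{thm:rational_incorrupt}, this loss is $(1-f_J^{\kappa r})\,\INTER_{\mathit{stake}}(v)$. By Definition~\ref{def:staking_req}, $\INTER_{\mathit{stake}}(v) \geq \INTER_i/n_i$. Summing over the more than $\supermaj n_i$ bribed members gives
\begin{equation*}
\sum_{v \in A} b_v \;>\; \supermaj n_i \cdot \frac{\INTER_i}{n_i}\,(1-f_J^{\kappa r}) \;=\; \supermaj \INTER_i\,(1-f_J^{\kappa r}).
\end{equation*}
Substituting $\INTER_i \geq V_i$ gives the bound $\supermaj V_i$ up to the negligible factor $(1-f_J^{\kappa r})$.

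\emph{Main obstacle.} The difficulty is turning this approximate bound into a strict one with no loss factor. I would first strengthen the count: the adversary needs $|A| \geq \lfloor \supermaj n_i \rfloor + 1$, and the extra member contributes at least one more $\INTER_i/n_i$. That additional term exceeds $f_J^{\kappa r}\cdot\supermaj \INTER_i$ whenever $f_J^{\kappa r} < 3/(2n_i)$, which holds for the standard parameters. The alternative is to measure the bribe against the full stake at risk. Detection is certain once a $\PoC$ reaches any honest junior, so a bribed node must be compensated for its entire stake. This gives a strict bound of at least $(\lfloor \supermaj n_i\rfloor+1)\INTER_i/n_i > \supermaj \INTER_i \geq \supermaj V_i$ directly. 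I would present this second version, noting that it uses the rational-actor assumption from Theorem~\ref{thm:rational_incorrupt}.
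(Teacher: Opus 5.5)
Your final Step~2 is the paper's argument. A rational member must be compensated for its whole stake, which is at least $\INTER_i/n_i$ by Definition~\ref{def:staking_req}; more than $\supermaj n_i$ members are needed; and $\INTER_i \geq V_i$ closes it. The paper gets strictness from a strict per-member inequality, while you get it from the count $\lfloor \supermaj n_i\rfloor + 1 > \supermaj n_i$; either works.

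This version shares the paper's idealisation. Your remark that detection is certain once a $\PoC$ reaches an honest junior does not make detection certain, because reaching one is exactly the probabilistic event of Lemma~\ref{lem:poc_propagation}. Your expected-loss version is therefore the more faithful reading of Theorem~\ref{thm:rational_incorrupt}. The extra-member patch does rescue it, but your threshold is slightly off. The surplus from that member is $(\lfloor \supermaj n_i\rfloor + 1 - \supermaj n_i)\,\INTER_i/n_i$, which can be as small as $\INTER_i/(3n_i)$. A condition valid for every $n_i$ is therefore $f_J^{\kappa r} < 1/(2n_i+1)$, not $3/(2n_i)$; this does not matter for the standard parameters.

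The real difference from the paper is Step~1. The paper does not derive the first sentence at all; it simply equates a successful attack with corrupting a supermajority. Your derivation from Theorem~\ref{thm:disagreement}(a) yields (I) only under $\lnot$(II), and only with high probability. As written it does not establish the unconditional ``requires''.

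A direct count covers the eclipse case too. A client accepts finality on a hash only with supermajority signatures on it (Definition~\ref{def:finality}). Honest members attest only to the canonical hash they observe. So every signer of a fraudulent hash must have been bribed, whether or not the clients are eclipsed.
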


\begin{IEEEproof}
Each corrupt member's stake is at least $\INTER_i / n_i$ (Definition~\ref{def:staking_req}). By Theorem~\ref{thm:rational_incorrupt}, a rational member accepts a bribe only if it exceeds their staked balance. The minimum bribe per member is $> \INTER_i / n_i$. Corrupting a supermajority requires bribing more than $\supermaj n_i$ members:
\begin{equation}
  \mathrm{total\ bribe} > \supermaj n_i \cdot \frac{\INTER_i}{n_i}
    = \supermaj \INTER_i \geq \supermaj V_i,
\end{equation}
where the last inequality uses $\INTER_i \geq V_i$.
\end{IEEEproof}

An attacker must spend more than two-thirds of a stream's value to corrupt it. This is the formal expression of the armoured-convoy principle: the cost of attack scales with the value of the target, not with the size of the global network.

\section{Privacy, Transparency, and the\\Coin--Content Separation}
\label{sec:privacy_separation}

\subsection{Two Orthogonal Layers}

Every stream in Intercloud has two logically distinct layers. The \textbf{coin layer} comprises balances, transfers, exchange rates, emission, and retirement events. These are secured by Watchers who see only hashes. The aggregate flow of coins between streams is observable to anyone watching the integrity layer, enabling regulatory oversight of value movement without revealing individual transaction contents. The \textbf{content layer} comprises the actual messages, documents, agreements, chat histories, ownership records, or disbursement rules encoded in $\mathcal{M}_i$. These are end-to-end encrypted and visible only to parties holding the appropriate decryption keys.

\begin{definition}[Coin--content separation]
\label{def:coin_content}
The \emph{coin layer} of stream $\Stream{i}$ consists of the subset of messages in $\mathcal{M}_i$ that modify balances, trigger emission/retirement events, or adjust the Intercoin reserve. These must produce integrity-layer effects (changes to $h_i$, $\INTER_i$, and exchange rates) visible to Watchers. The \emph{content layer} consists of all remaining messages --- private communications, signed agreements, governance votes, or any other application data, which may be fully encrypted.
\end{definition}

\begin{remark}[Pre-signed coin movements]
A stream's content layer can contain pre-signed authorisations for future coin movements. For example, a stream holding an escrow agreement can contain a cryptographically signed disbursement instruction that, when revealed, triggers a transfer on the coin layer. The \emph{existence} of an agreement is private, but its \emph{execution} on the coin layer is observable.
\end{remark}

\subsection{Regulatory Observability}

\begin{proposition}[Regulatory coin-weight observability]
\label{prop:regulatory}
A regulator with access to the integrity layer can observe, for every transfer between streams $\Stream{i}$ and $\Stream{j}$, the Intercoin weight transferred ($\Delta = a \cdot \mathit{exRate}(c)$), the direction, and the timestamp --- without learning the plaintext amount $a$, the coin type $c$, the identities of stream owners, or any message content.
\end{proposition}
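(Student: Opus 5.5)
The plan is to split the statement into two halves: what the regulator \emph{can} see, and what it \emph{cannot} learn. Both follow from the integrity/data layer separation (Definition~\ref{def:layers}) together with the coin--content separation (Definition~\ref{def:coin_content}).

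For the positive half, I would first fix what the integrity layer contains for each stream $\Stream{i}$: the tuple $(h_i, \INTER_i, \hash{\mathit{Rules}_i})$, updated on every append. By Definition~\ref{def:coin_content}, a coin-layer transfer message must produce integrity-layer effects. By Theorem~\ref{thm:zerosum}, a valid transfer changes $\INTER_i$ to $\INTER_i - \Delta$ and $\INTER_j$ to $\INTER_j + \Delta$, where $\Delta = a\cdot\mathit{exRate}(c)$.
\begin{itemize}
  \item \emph{Weight.} A regulator comparing consecutive integrity-layer snapshots sees the decrease at $\Stream{i}$ and the matching increase at $\Stream{j}$, and so reads off $\Delta$.
  \item \emph{Direction.} The direction is given by the sign of each change.
  \item \emph{Timestamp.} The timestamp is the epoch, or the Lamport time, at which the new hash $h_i'$ is attested by the swarm.
  \item \emph{Pairing.} To pair the debit at $i$ with the credit at $j$, I would appeal to Theorem~\ref{thm:security_tracks_value}(iv): the same $\Delta$ is committed atomically with the transfer, so the two weight changes are equal and coincide in logical time.
\end{itemize}
If several transfers land in the same interval, this pairing may be ambiguous. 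I would handle that by assuming the integrity layer records per-transfer weight deltas, which is consistent with Remark~\ref{rem:rate_timing}, saying the weight is ``auditable from the integrity layer.''

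For the negative half, I would argue that everything else the regulator holds is a function of hashes and weights only.
\begin{itemize}
  \item \emph{Amount and coin type.} The values $a$ and $c$ enter the integrity layer only through the product $\Delta = a\cdot\mathit{exRate}(c)$. I would exhibit two distinct pairs $(a,c)$ and $(a',c')$ with $a\cdot\mathit{exRate}(c) = a'\cdot\mathit{exRate}(c')$; these yield identical integrity-layer views, given that the message contents themselves are hidden.
  \item \emph{Message content.} The plaintext $\mathcal{M}_i$ appears only under $\hash{\cdot}$. Content-layer messages are encrypted and, by Definition~\ref{def:coin_content}, produce no weight effects.
  \item \emph{Owner identities.} The owner key $k_i$ lives in the data-layer tuple rather than the integrity layer, so the regulator sees only stream identifiers.
\end{itemize}

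The main obstacle is making ``without learning'' precise. Hashes reveal nothing only under a hiding assumption; the paper's Remark after Definition~\ref{def:layers} asserts this informally. I would state it as an explicit assumption: $H$ is modelled as a random oracle, or the commitments are hiding, and messages include fresh randomness so that low-entropy states cannot be brute-forced. I would then give an indistinguishability argument: two executions that agree on all $\Delta$s, directions and timings, but differ in $a$, $c$, owners or content, produce computationally indistinguishable integrity-layer transcripts.

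A secondary subtlety is that $\mathit{exRate}(c)$ is publicly computable for each coin type. A regulator who correlates $\Delta$ with known rate values could therefore guess $c$, so I would need to qualify the coin-type claim to hold only when the candidate $(a,c)$ pairs are ambiguous.
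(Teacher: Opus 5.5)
Your decomposition matches the paper's. Its proof cites Theorem~\ref{thm:zerosum} for $\Delta\INTER_i=-\Delta$ and $\Delta\INTER_j=+\Delta$, and notes that $(h_i,\INTER_i,\hash{\mathit{Rules}_i})$ is exactly what the integrity layer stores. It gets direction and pairing by declaring the relation $R_{ij}$ to be integrity-layer metadata. It then dismisses $a$, $c$, owners and content in one line, because they sit in the data layer. So the paper reads ``without learning'' as ``not given access to''. Under that reading your negative half is immediate and needs neither a hiding assumption on $H$ nor an indistinguishability argument.

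The real difference is that you read the claim as indistinguishability. Under that reading your caveat is correct, and it actually understates the problem:
\begin{itemize}
  \item Because $\mathit{exRate}(c)$ is publicly computable, $a=\Delta/\mathit{exRate}(c)$ is determined as soon as $c$ is. The qualification must therefore cover the amount as well as the coin type; two executions with equal $\Delta$ and equal $c$ cannot differ in $a$.
  \item If the $R_{ij}$ metadata the paper exposes includes $\mathcal{T}_{ij}$, then a singleton $\mathcal{T}_{ij}$ reveals both $c$ and $a$ outright.
\end{itemize}
Your route thus yields an honestly qualified statement. It requires ambiguity over $(a,c)$ jointly, plus hiding of $H$ with fresh randomness; the latter also matters for low-entropy $\mathit{Rules}_i$ sitting behind $\hash{\mathit{Rules}_i}$. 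The paper's route yields the unqualified statement only in the weak non-access sense.

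One small correction to your positive half. Theorem~\ref{thm:security_tracks_value}(iv) gives equality of debit and credit, not simultaneity: $\Delta$ is committed with the append to $\mathcal{M}_i$, while the credit occurs at the append to $\mathcal{M}_j$. Your pairing therefore rests entirely on your extra per-transfer-record assumption. The paper's pairing rests similarly on $R_{ij}$ being visible, which by itself does not disambiguate concurrent transfers on the same relation.
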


\begin{IEEEproof}
Every valid transfer updates $h_i$ and $h_j$ and moves Intercoin weight $\Delta$ from $\INTER_i$ to $\INTER_j$ (Theorem~\ref{thm:zerosum}). The integrity layer stores $(h_i, \INTER_i, \hash{\mathit{Rules}_i})$, so the change $\Delta\INTER_i = -\Delta$ and $\Delta\INTER_j = +\Delta$ are observable at the integrity layer, as is the timestamp. The pre-established relation $R_{ij}$ is also integrity-layer metadata. The plaintext amount $a$, coin type $c$, stream owner identities, and message content are in the data layer and are not accessible to Watcher nodes.
\end{IEEEproof}

\begin{remark}[Flow graph vs.\ transaction graph]
A regulator sees a directed weighted graph of Intercoin weight flows between stream identifiers, with timestamps. This is weaker than a full transaction graph but stronger than nothing: the topology of economic relationships and the magnitude of value flows are visible. This matches the regulatory need to detect systemic risk and large-scale money flows without individual transaction surveillance.
\end{remark}

This is a strictly weaker transparency requirement than any public blockchain: instead of revealing all transaction data globally, Intercloud reveals only the aggregate flow structure. Communities can publish their own coins and participate in the global economy while their internal transactions remain private.

\subsection{The Global Intercoin Reserve}

Intercoin is the single global reserve currency of the Intercloud network. Its role is to represent \emph{security weight}. When a community issues local coins backed by Intercoin, the Intercoin reserve is locked in the currency stream until coins are retired. The exchange rate formula (Definition~\ref{def:exchange_rate}) ensures that the security weight of a unit of local coin is always exactly proportional to the Intercoin reserve per circulating unit.

\begin{corollary}[Conservation of security weight]
\label{cor:conservation}
The total Intercoin weight across all streams is conserved: $\sum_i \INTER_i = \INTER_{\mathit{total}}$ (a fixed global supply). Security is not created or destroyed --- it moves between streams as coins move.
\end{corollary}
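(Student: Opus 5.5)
The plan is to argue by induction over the sequence of state-changing events in the network, with $\sum_i \INTER_i$ as the invariant quantity. The base case is network genesis. There the total supply $\INTER_{\mathit{total}}$ is distributed among the initial streams and currency-stream reserves by the initial stake deposits of Definition~\ref{def:backing}. This requires a bookkeeping convention. Each reserve $\INTER_c^{\mathit{reserve}}$ is itself the weight $\INTER_c$ of the currency stream $\Stream{c}$ (Definition~\ref{def:currency_stream}). Any Intercoin held in a node's vesting stream (Definition~\ref{def:vesting}) is also counted as the weight of that stream. With this convention, every unit of Intercoin is attributed to exactly one stream at every time, and the sum starts at $\INTER_{\mathit{total}}$.

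For the inductive step, I will enumerate the event types that can change some $\INTER_i$ and show each leaves the sum unchanged.
\begin{enumerate}
\item A valid transfer is covered directly by Theorem~\ref{thm:zerosum}. It changes only $\INTER_i$ and $\INTER_j$, and it preserves $\INTER_i+\INTER_j$. Since $\Delta$ is recorded atomically with the append (Theorem~\ref{thm:security_tracks_value}(iv)), later changes to $\mathit{exRate}(c)$ cannot retroactively break the balance.
\item Emission and retirement events (Definition~\ref{def:emit_retire}) change $\mathit{supply}_c$ and some coin balances. They do not move Intercoin, so every $\INTER_i$ is untouched. They do change $\mathit{exRate}(c)$, but no term of $\sum_i \INTER_i$ depends on it.
\item Reserve deposits and withdrawals are, by the rate-limit remark, routed through pre-subscribed economic relations. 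They are therefore transfers of Intercoin between $\Stream{c}$ and another stream, and case 1 applies.
\item Slashing and the PoC lottery (Definition~\ref{def:poc_lottery}) move stake from the corrupt nodes' streams to the winner. I will model this as a transfer as well, so the sum is again preserved.
\end{enumerate}

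The remaining ingredient is that no rule creates or destroys Intercoin. I will check this against every definition in Sections~\ref{sec:model}--\ref{sec:local_coins}. Local-coin emission mints local coins only, never Intercoin.

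The main obstacle is that the paper never explicitly states that Intercoin is only ever moved and never minted. Node rewards (``Watchers earn Intercoin'') look like issuance. I would try first to treat rewards as paid out of stream fees, i.e.\ as transfers from the fee-paying stream to the node's vesting stream. If that reading is unavailable, I would instead state the corollary under the explicit hypothesis that no Intercoin issuance occurs outside genesis.

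A second subtlety concerns the backing lemma. In the proof of Lemma~\ref{lem:backing}, the recipient side increases $\INTER_j$ by $\Delta$ while the sender decreases $\INTER_i$ by $\Delta$, so the pair is consistent. I will still confirm that the sender-side decrement is not double-counted when the same event also appears as a content-layer message. Definition~\ref{def:coin_content} settles this, since only coin-layer messages affect $\INTER$.
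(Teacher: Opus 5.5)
Your proof uses the same mechanism as the paper's: induction over state changes, with the pairwise identity $\INTER_i'+\INTER_j'=\INTER_i+\INTER_j$ of Theorem~\ref{thm:zerosum} as the step. The difference is scope. The paper inducts only over ``sequences of valid transfers'' and takes stream creation (funded by a transfer from an existing stream) as its base case. It therefore silently assumes that valid transfers are the only events that ever change an $\INTER_i$. Its ``base case'' also presupposes an earlier state in which the sum already equals $\INTER_{\mathit{total}}$.

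Your genesis base supplies that missing starting point, and your case analysis states what the paper leaves implicit:
\begin{itemize}
\item emission and retirement move no Intercoin;
\item reserve adjustments are relation-mediated transfers;
\item slashed stake is fully redistributed;
\item rewards must not be minted;
\item $\INTER_c^{\mathit{reserve}}$ must be counted as $\INTER_c$ for reserve deposits to be zero-sum.
\end{itemize}
The paper's route gives a one-line corollary. Yours gives the hypotheses under which the corollary actually holds.

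Three small repairs are needed.

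First, your list of events omits streams created after genesis, which is exactly the case the paper's proof handles. A new index $i$ enters the sum with $\INTER_i^{(0)}>0$. You need the paper's assumption that this stake is debited from an existing funding stream, so that your case~1 applies.

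Second, slashing is not a valid transfer in the sense of Definition~\ref{def:transfer}. In general there is no pre-existing relation $R_{ij}$ from the slashed node's vesting stream to the randomly drawn lottery winner. So do not cite Theorem~\ref{thm:zerosum} there. Argue directly that the amount debited equals the amount credited, which Definition~\ref{def:poc_lottery} gives. That equality is the only part of the theorem's proof the sum needs; validity is used there only for non-negativity.

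Third, your fallback hypothesis should exclude burning as well as issuance. The per-hop fees of Theorem~\ref{thm:termination} are another place Intercoin could leave the sum, unless they are routed to some stream.
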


\begin{IEEEproof}
\emph{Base case.} When a new stream is created with initial Intercoin weight $\INTER_i^{(0)}$, that weight is transferred from an existing funding stream. By Theorem~\ref{thm:zerosum}, this transfer preserves the global sum.

\emph{Inductive step.} Assume $\sum_i \INTER_i = \INTER_{\mathit{total}}$ after $k$ transfers. A valid $(k{+}1)$-th transfer preserves $\INTER_i + \INTER_j$ (Theorem~\ref{thm:zerosum}) and leaves all other $\INTER_\ell$ unchanged. By induction, the sum is conserved for all sequences of valid transfers.
\end{IEEEproof}

\section{Comparison and Related Work}
\label{sec:comparison}

Table~\ref{tab:comparison} summarises the key differences between Intercloud and other major decentralised consensus approaches. Intercloud is the only system among these in which per-transaction validation cost is independent of network size and Watcher nodes do not see plaintext.

\begin{table*}[t]
\centering
\small
\caption{Comparison of decentralised consensus systems.}
\label{tab:comparison}
\begin{tabular}{lccccc}
\toprule
System & Global consensus & Per-tx validation & Ledger visible & Double-spend & Max ambiguity \\
\midrule
Bitcoin       & PoW, global  & $O(N)$        & Yes        & Probabilistic    & Unbounded \\
Ethereum      & PoS, global  & $O(N)$ + gas  & Yes        & Probabilistic    & $\leq$ epoch \\
PBFT          & BFT voting   & $O(n^2)$      & Config.    & Deterministic    & None$^\dagger$ \\
Zcash         & PoW, global  & $O(N)$        & Pool only  & Probabilistic    & Unbounded \\
\textbf{Intercloud} & VRF seed & $O(1)$ amort. & Hashes only$^\ddagger$ & $\geq 1-e^{-s}$ & $\leq T_{\epoch}$ \\
\bottomrule
\end{tabular}
\vspace{2pt}
\begin{flushleft}
\footnotesize
$^\dagger$PBFT guarantees termination only under partial synchrony~\cite{castro1999pbft}; under full asynchrony it may not terminate.\\
$^\ddagger$Watchers hold only hashes; coin-weight flows between streams are observable at the integrity layer (\S\ref{sec:privacy_separation}).
\end{flushleft}
\end{table*}

\section{Limitations and Future Work}
\label{sec:limitations}

\subsubsection{Byzantine Executor}
Theorems~\ref{thm:doublespend} and~\ref{thm:capitalflight} assume a single correct (non-Byzantine) Executor per stream. A Byzantine Executor could produce conflicting hashes, generating $\PoC$s against itself; these would accumulate until the swarm is deemed corrupt and the stream goes Red. However, a Byzantine Executor cannot forge a valid transfer without the stream owner's private key, so funds cannot be stolen outright --- only service disrupted. Full Byzantine Executor resistance requires a threshold-signature Executor cluster or a small BFT sub-committee.

\subsubsection{Executor availability}
A crashed Executor makes the stream temporarily unavailable. The append-only log ensures crash recovery to the last consistent state. High-availability Executor deployments (active-passive replication) are straightforward but not formalised here.

\subsubsection{Exchange rate oracle}
Definition~\ref{def:backing} uses a publicly known exchange rate. In practice this rate changes over time and must come from an oracle. If the oracle is compromised, the backing invariant could be violated. A multi-oracle range-agreement mechanism mitigates this risk but is not formalised in the current model.

\subsubsection{Epoch duration}
The bound $T_{\epoch} > D \cdot \delta_{\min}$ depends on the network diameter $D$ and minimum hop time $\delta_{\min}$. Choosing $T_{\epoch}$ too short risks ripple ID collisions across epochs; choosing it too long increases the maximum Yellow duration. Adaptive epoch-duration protocols are future work.

\subsubsection{Network-level privacy}
Even with IP stripping after the first hop, the first-hop IP is visible to the first relay. Full network-level anonymity requires a Mixnet or onion-routing layer, deferred to future work.

\section{Conclusion}
\label{sec:conclusion}

Intercloud achieves decentralised economic consensus with properties that no prior system simultaneously exhibits: sub-linear per-transaction validation cost, hash-only privacy for all Watcher nodes, chilling-effect deterrence replacing Byzantine voting, explicit user-controlled finality via the three-colour state model, and a self-calibrating security economy in which protection is automatically proportional to the value being protected.

The main theorems build on the Magarshak Machine~\cite{magarshak2026mm} and Hoepman's clerk-set bounds~\cite{hoepman2008}: Ripple Termination, Swarm Security, Disagreement Characterisation, Buridan Inapplicability, Rational Incorruptibility, Security Tracks Value, and Conservation of Security Weight form a coherent stack from the network layer to the economic layer. The junior-node corruption detection mechanism and PoC lottery reward system ensure that exposing dishonest nodes is the dominant strategy for every rational participant. Intercoin vesting ensures that a node's stake always exceeds the value it could extract from any single stream it watches.

Local coins, backed by a pluggable exchange-rate formula tied to the ratio of Intercoin reserve to circulating supply, flow across pre-subscribed economic rails. The dot-product of balances and exchange rates determines each stream's security weight automatically. The coin layer and content layer are strictly separated: regulators can follow the flow of value between streams without observing identities, messages, or rules.

The blockchain model deploys an armoured convoy for every dollar. Intercloud deploys a convoy proportional to the square root of the value transported --- and proves that the convoy self-assembles, self-calibrates, and self-polices without any global coordinator. The single global agreement required is an epoch random seed, costing $O(N)$ messages per epoch and amortising to zero per transaction at scale. The Turing Machine defines what can be computed. The Magarshak Machine defines how distributed state evolves safely. Intercloud defines how economies of Magarshak Machines coordinate without a global ledger.

\balance
\bibliographystyle{IEEEtran}
\bibliography{references}

\end{document}